\newcommand{\Dc}{\mathcal{D}}
\newcommand{\Ec}{\mathcal{E}}
\newcommand{\Nc}{\mathcal{N}}
\newcommand{\Uc}{\mathcal{U}}
\newcommand{\Xc}{\mathcal{X}}
\newcommand{\Xh}{{\hat{X}}}
\DeclareMathOperator\E{E}
\newcommand\ie{i.e.,\xspace}
\def\textiid{i.i.d.\@\xspace}
\newcommand\iid{\ifmmode\text{ i.i.d. } \else \textiid \fi}
\newcommand{\ind}{\mathbbmss{1}}
\title{Denoising of  structured random processes}
\author{Wenda Zhou\thanks{Wenda Zhou is with the Statistics Department, Columbia University, email: wz2335@columbia.edu}, Shirin Jalali\thanks{Shirin Jalali is with the Mathematics and Algorithms Group in  Nokia  Bell Labs,  e-mail: Shirin.Jalali@nokia-bell-labs.com}}
\newcommand{\regquant}{r_b}
\newcommand{\diststru}{d}
\newcommand{\Xbbf}{{\mathbb{\mathbf{X}}}}
\newcommand{\Sbbf}{{\mathbb{\mathbf{S}}}}
\newcommand{\QuantInt}[2]{\llbracket #1 \rrbracket_{#2}}
\newtheorem{example}{Example}
\begin{document}

\maketitle

\begin{abstract}
Denoising a stationary process $(X_i)_{i \in \ZZ}$ corrupted by additive white
Gaussian noise $(Z_i)_{i \in \ZZ}$, \ie recovering $X^n$ from $Y^n=X^n+Z^n$,
is a classic and fundamental problem in information theory and statistical
signal processing. 
denoising algorithms, for general analog sources, theoretically-founded
computationally-efficient methods are yet to be found.
In a Bayesian setup, given the distribution of $X^n$, a minimum mean square
error (MMSE) denoiser computes $\E[X^n|Y^n]$. However, for general sources,
computing $\E[X^n|Y^n]$ is computationally very challenging, if not
infeasible. In this paper, starting from a Bayesian setup, a novel denoising
method, namely, quantized maximum a posteriori (Q-MAP) denoiser, is proposed
and its asymptotic performance is analyzed. Both for memoryless sources, and
for structured first-order Markov sources, it is shown that, asymptotically,
as $\sigma^2$ (noise variance) converges to zero, ${1\over
\sigma^2}\E[(X_i-\Xh^{\rm Q-MAP}_i)^2]$ converges to the information
dimension of the source. For the studied memoryless sources, this limit is
known to be the optimal. A key advantage of the Q-MAP denoiser, unlike an
MMSE denoiser, is that it highlights the key properties of the source
distribution that are to be used in its denoising. This property dramatically
reduces the computational complexity of approximating the solution of the
Q-MAP denoiser. Additionally, it naturally leads to a learning-based
denoiser. Using ImageNet database for training, initial simulation results
exploring the performance of such a learning-based denoiser in image
denoising are presented.
\end{abstract}

\section{Introduction}

\subsection{Problem statement}
Consider the classic problem of stochastic denoising: a stationary  process $\Xbbf=(X_i)_{i \in \ZZ}$, $X_i\in\Xc$,  is corrupted by additive white Gaussian noise $(Z_i)_{i \in \ZZ}$. Given $Y^n=X^n+Z^n$, the goal of a denoiser is to estimate $X^n$. In a Bayesian setting,  the full distribution of  $X^n$ is known. In such a setup, an estimator  that minimizes the  mean square error,   the MMSE  estimator, lets $\Xh^n=\E[X^n|Y^n]$. However, for general real-valued  sources with memory, computing  $\E[X^n|Y^n]$ is very demanding. Additionally, the actual denoising problem is more challenging, as in practice, we rarely have access to the source distribution. Instead, typically,  we are given either i)  some training dataset, or ii) some properties of the source extracted by  experts. 
 

Despite the mentioned challenges, there has been a large body of work on
developing efficient denoising algorithms. While for discrete sources,
especially for those with a small alphabet, there has been considerable
progress towards designing theoretically-founded asymptotically-optimal
efficient denoising algorithms, the same is not true for analog sources. In
analog denoising, most well-known methods are heuristic algorithms that are
developed using the expert knowledge for a specific type of data. For
instance, in image denoising, one of the classic approaches is based on
wavelet thresholding \cite{donoho1994ideal}, which works well because images
are known to be sparse in wavelet domain. (Refer to Chapter 11 of
\cite{stephane1999wavelet} for more information on wavelet denoising and its
extensions.) A more advanced denoising method, namely, BM3D
\cite{dabov2007image}, goes beyond sparsity in the wavelet transform and
enhances sparsity by looking for similar 2D patches in an image. In both
cases, the employed structure is discovered by image processing experts. The
denoising algorithm is tailored such that it takes advantage of that type of
structure. Other more recent techniques that achieve impressive results are
heuristic methods that employ neural networks
\cite{burger2012image,Ulyanov2018Deep}.

In this paper, inspired by recent progress in the Bayesian compressed sensing
of general analog sources \cite{jalali2018new}, we first propose, a
theoretically-founded Bayesian denoising method that is not tailored for a
specific source distribution. We refer to the new method quantized maximum a
priori (Q-MAP) denoiser, as it employs proper quantization of the source
alphabet to measure the conformation of a sequence with the known source
distribution. For memoryless sources and first-order Markov processes, we
characterize the asymptotic performance of the proposed method, as the
variance of the noise goes to zero, and the ambient dimension of the signal
grows to infinity. We show first-order optimality of the Q-MAP denoiser for
stationary memoryless sources. That is, for such sources, the limit of the
expected distortion achieved by the Q-MAP denoiser divided by the variance of
the noise converges to its optimal value achieved by an MMSE denoiser. We
also show that the proposed method leads to a learning-based denoising method
that can be applied to various data types. 
technique to image denoising and report some initial numerical results.

The Q-MAP denoiser is based on an optimization over the space of possible reconstruction sequences. Given  $y^n$ and the distribution of the source, the cost assigned to a  candidate sequence $u^n$ by the Q-MAP denoiser consists of the weighted addition of  1) $\norm{u^n-y^n}^2$, and 2) a term that  measures the conformity of $u^n$ with the source distribution. The novelty of Q-MAP denoiser lies in the second term. This term summarizes the source distribution into potentially exponentially many weights. However, as explained later, most of such terms are not important and indeed there are a relatively small number of such terms that need to be taken into account. Additionally, this property guides us to design a learning-based denoiser that ``learns'' those key weights form available training data. As an initial proof of concept, we use ImageNet database \cite{deng2009imagenet} to train our model, and explore the performance of the proposed method in image denoising.

The organization of the paper is as follows. Section \ref{sec:Q-MAP}
introduces the Q-MAP denoiser. Section \ref{sec:ID-MMSE} reviews the concept
of the information dimension of a process and its known connections with the
asymptotic MMSE behavior of an optimal Bayesian denoiser, for memoryless
sources. Section \ref{sec:main} presents the main theoretical results we
prove for the Q-MAP denoiser. Section \ref{sec:proofs} presents the detailed
proofs of the main results. Section \ref{sec:conclusion} concludes the paper.

%
%

\subsection{Notations and definitions}

Given a sequence $u^n$, with discrete alphabet $\Uc$, define its $k$-th order
empirical distribution as $\hat{p}^{k}(\cdot|u^n)$. That is, for
$a^{k}\in\Uc^{k}$,
\begin{align}
\hat{p}^{k}(a^{k})  = {1\over n-k+1} \sum_{i=1}^{n-k+1} \ind_{u_i^{i+k-1}=a^{k}}.\label{eq:emp-dist}
\end{align}

Given $x\in\mathds{R}$, the $b$-bit quantized version of $x$ is defined as
$\Quant{x}{b}\triangleq2^{-b}\lfloor 2^bx\rfloor$. For a vector
$x^k\in\mathds{R}^n$, $\Quant{x^k}{b}\triangleq (\Quant{x_1}{b},\ldots,
\Quant{x_k}{b}).$

For $x\in\mathds{R}$, $\delta_x$ denotes the Dirac measure with an atom at
$x$. $\log$ and $\ln$ refer to logarithm in base $2$ and normal logarithm,
respectively.

\section{Q-MAP denoiser}\label{sec:Q-MAP}

Consider  stationary process $\Xbbf=(X_i)_{i \in \ZZ}$ corrupted by process  $(Z_i)_{i \in \ZZ}$, where  $(Z_i)_{i \in \ZZ}$ are independently identically distributed (i.i.d.) $\Nc(0,\sigma^2)$. In Bayesian denoising setting, we observe  $Y^n = X^n +  Z^n$ and have access to the full distribution of the process $\Xbbf$. Inspired by the recovery method proposed in \cite{jalali2018new} for  Bayesian compressed sensing, we propose the following denoiser, which we refer to as Bayesian quantized maximum a priori  (Q-MAP) denoiser. First pick memory parameter $k\in\ZZ^+$ and quantization level $b\in\ZZ^+$. Let $\Xc$ and $\Xc_b$ denote the alphabet of  $\Xbbf$ and its $b$-bit quantized version, respectively. That is, 
\[
\Xc_b=\{\Quant{x}{b}: x\in\Xc\}.
\]
For  $a^{k+1}\in\Xc_b^{k+1}$, define weight $w_{a^{k+1}}>0$ as
\begin{align}
w_{a^{k+1}}\triangleq -\log \pi^k_b(a^{k+1}),\label{eq:weights}
\end{align}
where  $\pi_b^{k+1}$  denotes the distribution of  $\Quant{X^{k+1}}{b}$, \ie  
\begin{align}
\pi^k_b(a^{k+1})=P(\Quant{X^{k+1}}{b}=a^{k+1}). 
\end{align}
 Then, given $\lambda>0$,  the Q-MAP denoiser estimates $X^n$, as 
 \begin{align}
    \hat{X}^{n, k, b} = \argmin_{u^n\in\Xc^n} & \Big[\norm{Y^n - u^n}_2^2+  \lambda c_{w}(u^n)\Big],\label{eq:Q-map-d}
\end{align}
where
\[
c_w(u^n)\triangleq \sum_{a^{k+1}\in\Xc_b^{k+1}} w_{a^{k+1}}\hat{p}^{k+1}(a^{k+1}|\Quant{u^n}{b}).
\]
Here, $\hat{p}^{k+1}(\cdot|u^n)$ denotes the empirical distribution defined
earlier in \eqref{eq:emp-dist}. To understand how Q-MAP works, note that the
cost function consists of two terms. The first term $\norm{Y^n - u^n}_2^2$ is
a familiar measure that tries to ensure that the reconstruction sequence is
not far from the observed vector. On the other hand, the second term,
$c_w(u^n)$, is less familiar. The role of this term is to impose the
structure of the source and find a sequence that is consistent with the
source distribution. To better understand function $c_w(u^n)$, note that,
using some simple algebra, $c_w(u^n)$ can be written as
\begin{align} 
c_w(u^n)=\sum_{i=1}^{n-k} w_{\Quant{u^{i+k}_i}{b}}.
\end{align}
In other words, to compute the cost associated with a potential
reconstruction sequence $u^n$, one needs to slide over $u^n$ with a window of
length $k+1$. Each block of length $k+1$ is quantized by $b$ bits and then
the weight associated it is considered. All these weights, corresponding to
all quantized blocks of length $k+1$, are added together.

To better understand the Q-MAP denoiser, its associated optimization, and its implications, we next review two classic examples. 
\begin{example}[Sparse source]\label{ex:1}
Consider an i.i.d.~process $\Xbbf$, where $X_i\sim (1-q_0)\delta_0+q_0\pi_c$, $\pi_c$ denoting the pdf of a uniform distribution over $(0,1)$. It is straightforward to verify that, for $k=1$,
 \[
   \textstyle{ \hat{X}^{n, k, b} = \argmin_{u^n\in\Xc^n}  \Big[\norm{Y^n - u^n}_2^2+  \lambda(1+\gamma) \norm{u^n}_0 \Big]},
\]
  where $\gamma=O({1\over b})$ is an absolute constant only depending on $q_0$ and $b$.
Here, $\norm{u^n}_0=|\{i: u_i\neq 0\}$.
\end{example}

\begin{example}[Piecewise-constant 1-Markov source] \label{ex:2}As the next example, consider a stationary 1-Markov process, where conditioned on $X_i=x_i$, $X_{i+1}$ is distributed as $(1-q_0)\delta_{x_i}+q_0\pi_c$, where $\pi_c$ is the same pdf as the previous example. Let $N_{\rm jump}(u^n)=\{i: u_i\neq u_{i+1}\}$ count the number of jumps in $u^n$. Again it is straightforward to verify that, for $k=2$, 
\[
     \hat{X}^{n, k, b}=\argmin_{u^n\in\Xc^n}\Big[{1\over n}\norm{Y^n-u^n} +\lambda(1+\gamma)N_J(u^n)\Big],
  \]
  where $\gamma=O({1\over b})$ is an absolute constant only depending on $q_0$ and $b$.
\end{example}

  These examples  can  be generalized to much
  more general distributions. Moreover, while \eqref{eq:Q-map-d}
  might suggest that there are exponentially many weights that need to be
  computed (or learned), these examples verify that for structured processes,
  that number is substantially smaller than $|\Xc_b|^{bk}$. In fact, it can
  be seen that, those weights correspond to the key features of the
  distribution, e.g., sparsity or being piece-wise constant. After such
  simplification of the weights, one can consider different routes towards
  approximating the solution of \eqref{eq:Q-map-d} (for instance one based on
  dynamic programming). In Section \ref{sec:numerical}, we explain one such
  method and report some initial results.

\section{Information dimension  and MMSE}\label{sec:ID-MMSE}

In information theory, the entropy rate of a discrete stationary process is a
well-known measure of the level of information involved in that process
\cite{cover}. On the other hand, all analog processes have an infinite
entropy rate. However, this does not mean that all analog processes are the
same. This manifests itself in various applications, such as compressed
sensing, \ie recovering a source from its under-determined linear
measurements. Not all analog stationary processes lend themselves to compressed
sensing. Therefore, a measure of ``structuredness'' beyond entropy rate is
required for such analog sources. One such measure of structuredness  that in recent years have shown to be very relevant is R{\'e}nyi information dimension \cite{renyi1959dimension}.
\begin{definition}
The upper  R{\`e}nyi ID of random variable $X$ is defined as
\[
\bar{d}(X)=\limsup_{b\to\infty} {H([X]_b)\over b}.
\]
The lower R{\`e}nyi of $X$ is defined as $\underline{d}(X)=\liminf_{b\to\infty} {H([X]_b)\over b}$). If $\bar{d}(X)=\underline{d}(X)$, the R{\`e}nyi information dimension of random variable $X$ is defined as $d(X)=\bar{d}(X)=\underline{d}(X)$.
\end{definition}
This definition was generalized in \cite{JalaliP:17-IT} to stationary analog processes. 
\begin{definition}
The upper  ID of stationary process $\Xbbf$ is defined as
\[
\bar{d}(\Xbbf)=\lim_{k\to\infty} \limsup_{b\to\infty} {H(\Quant{X_{k+1}}{b}|\Quant{X^{k}}{b})\over b}, 
\]
The lower  ID of  process $\Xbbf$ is defined analogousely, by replacing $\limsup$ with $\liminf$. If $\bar{d}(\Xbbf)=\underline{d}(\Xbbf)$, the ID of stationary process $\Xbbf$ is defined as $d(\Xbbf)=\bar{d}(\Xbbf)=\underline{d}(\Xbbf)$.
\end{definition}
To gain some intuition on this measure, note that the ID of the i.i.d.~process studied in Example \ref{ex:1} can be shown to be equal to $q_0$ \cite{renyi1959dimension}. Moreover, the ID of the 1-Markov process described in \Cref{ex:2} is also equal to $q_0$ \cite{JalaliP:17-IT}.

To see the connection between the defined ID  and the denoising problem, consider the problem of scalar denoising: random variable $X$ is corrupted with an additive Gaussian noise $Z$,  $Z\sim\Nc(0,\sigma^2)$, as $Y=X+Z$.   Let $\Xh=\E[X|Y]$ denote the MMSE estimator of $X$ given $Y$ and the distribution of $X$.  It has been proven that    for discrete random variables and mixture of discrete and continuous random variables, $\lim_{\sigma\to 0}{1\over \sigma^2}\E[(X-\Xh)^2]$   is equal to the R{\'e}nyi ID of $X$ \cite{wu2011mmse}.  Roughly speaking, in the next section,  we prove that  for i) memoryless sources and ii) first order Markov sources, asymptotically, the proposed Q-MAP denoiser achieves the ID of the source. Combined with the mentioned results on the connection between MMSE and R{\`e}nyi ID, these results provide further evidence on the effectiveness of the proposed method. 

\section{Main Results}\label{sec:main}

We show that for appropriate choices of parameters, the Q-MAP estimator
$\hat{X}^{n, k, b}$ is able to recover structured signals in the high-SNR
($\sigma \rightarrow 0$) regime with the information-theoretic optimal
loss. More precisely, as an illustration of this general phenomenon,
we consider two cases: a first case where $X$ is i.i.d., and a more general 
case where $X$ is 1-Markov.

\begin{theorem}
    \label{thm:recovery-iid}
    Suppose that $\Xbbf$ denotes an i.i.d. process, and that the distribution of $X_0$ verifies:
    \begin{equation}
        X_0 \sim q_0 \pi_c + \sum_{p = 1}^m q_p \delta_{x_p},\label{eq:iid-dist}
    \end{equation}
    where $q_p$ are positive weights such that $\sum_{p = 0}^m q_p = 1$,
    $x_1, \dotsc, x_m \in (0, 1)$ denote arbitrary given values,
    and $\pi_c$ denotes an absolutely continuous distribution with bounded density.
    Then, for $\lambda = \sigma^{3 / 2}$, the Q-MAP estimator verifies:
    \begin{equation}
        \lim_{\sigma \rightarrow 0} \lim_{b \rightarrow \infty}
        \frac{1}{\sigma^2} \EE \left(X_0 -  \hat{X}_0^{1, 1, b}\right)^2 = q_0.
    \end{equation}
\end{theorem}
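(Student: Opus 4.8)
The plan is to use the i.i.d.\ structure to collapse the Q-MAP optimization \eqref{eq:Q-map-d} to a scalar thresholding rule, pass to the limit $b\to\infty$ to obtain a clean limiting estimator, and then run a Gaussian tail computation as $\sigma\to0$ in which the threshold is pinned down by the choice $\lambda=\sigma^{3/2}$.

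First I would decouple the objective. As recorded in the excerpt, for $k=1$ we have $c_w(u^n)=\sum_{i=1}^{n-1}w_{[u_i,u_{i+1}]_b}$, and independence makes the pair weight additive, $w_{[u_i,u_{i+1}]_b}=g_b([u_i]_b)+g_b([u_{i+1}]_b)$ with $g_b(a)=-\log\P([X]_b=a)$. Thus the cost is a sum of per-coordinate terms and $\hat X_0^{1,1,b}$ solves the scalar problem $\argmin_{u\in\Xc}[(Y_0-u)^2+\lambda\kappa\,g_b([u]_b)]$ for an absolute constant $\kappa$ recording the interior multiplicity. Running the computation of Example~\ref{ex:1} with the atoms $\{x_1,\dots,x_m\}$ in place of $\{0\}$ --- the bin of an atom has probability tending to $q_p>0$ while a continuous bin has probability $\approx q_0\rho(\cdot)2^{-b}$ --- shows that, after dropping an additive constant, $\lambda\kappa\,g_b$ equals $\lambda(1+\gamma)$ times $\ind[u\notin\{x_1,\dots,x_m\}]$ with $\gamma=O(1/b)$, the $O(1)$ constant $\kappa$ being harmless. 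Letting $b\to\infty$ gives the limiting thresholding rule
\[
\hat X_0^\infty=\argmin_{u\in\Xc}\Big[(Y_0-u)^2+\lambda\,\ind[u\notin\{x_1,\dots,x_m\}]\Big],
\]
which returns the nearest atom $x_{p^\star}$ when $|Y_0-x_{p^\star}|\le\sqrt\lambda$ and returns $Y_0$ otherwise; the effective threshold is $\tau=\sqrt\lambda=\sigma^{3/4}$.

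To justify the inner limit I would fix $\sigma$ and note $\hat X_0^{1,1,b}\to\hat X_0^\infty$ for every $Y_0$ outside the null set $\{(Y_0-x_{p^\star})^2=\lambda\}$, with $(X_0-\hat X_0^{1,1,b})^2$ dominated by an integrable envelope (for fixed $\sigma$ the estimate is either an atom in $(0,1)$ or lies within a $b$-independent bounded distance of $Y_0$, and $\EE Y_0^2<\infty$); dominated convergence then gives $\lim_{b\to\infty}\EE(X_0-\hat X_0^{1,1,b})^2=\EE(X_0-\hat X_0^\infty)^2$. It remains to analyze the scalar estimator as $\sigma\to0$, splitting on $X_0$. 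When $X_0=x_p$ (total probability $1-q_0$), for $\sigma$ small enough that $\tau$ is below half the smallest inter-atom gap, $\hat X_0^\infty=x_p=X_0$ with zero error unless $|Z_0|>\tau$; since $\tau/\sigma=\sigma^{-1/4}\to\infty$, both $\P(|Z_0|>\tau)$ and $\sigma^{-2}\,\EE[Z_0^2\,\ind\{|Z_0|>\tau\}]$ tend to $0$, so atoms contribute $o(1)$ to $\sigma^{-2}\EE(\cdot)$. When $X_0\sim\pi_c$ (probability $q_0$), on the event $\mathrm{dist}(X_0,\{x_p\})>\tau+|Z_0|$ the rule does not snap, $\hat X_0^\infty=Y_0$, and the error equals $Z_0^2$; the complementary event has probability at most $2m\|\rho\|_\infty(\tau+\EE|Z_0|)=O(\tau)$, and there the squared error is $O(\tau^2)$, so its contribution is $O(\tau^3)=O(\sigma^{9/4})=o(\sigma^2)$. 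Hence the continuous part contributes $q_0\,\sigma^{-2}\EE Z_0^2=q_0$, and adding the two parts yields $\lim_{\sigma\to0}\lim_{b\to\infty}\sigma^{-2}\EE(X_0-\hat X_0^{1,1,b})^2=q_0$.

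The crux of the argument, and the step I expect to be the main obstacle, is making the first two steps rigorous: controlling $g_b$ uniformly in the bin location and exchanging $\lim_{b\to\infty}$ with the expectation, in particular exhibiting a $b$-uniform integrable dominating function, which is delicate when $\pi_c$ has unbounded support. The concluding $\sigma\to0$ estimate is merely a balancing act: one needs $\sigma\ll\tau\ll\sigma^{2/3}$, equivalently $\sigma^2\ll\lambda\ll\sigma^{4/3}$, so that the noise almost never carries an atom sample past the threshold ($\tau/\sigma\to\infty$) while the probability that a continuous sample falls within $\tau$ of an atom, multiplied by the $O(\tau^2)$ error it then incurs, stays $o(\sigma^2)$. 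The prescribed $\lambda=\sigma^{3/2}$ is simply a convenient exponent in the admissible window $(4/3,2)$.
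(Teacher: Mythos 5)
Your proposal is correct (modulo regularity caveats that the paper itself also glosses over), but it is organized in a genuinely different way from the paper's proof. Both arguments share the same skeleton: reduction to a scalar problem when $k=1$, the observation that the normalized regularizer $r_b$ converges to the indicator of the atom set (your second step is exactly \cref{lemma:convergence-regularizer-iid}), the atom/continuous split on $S$, and the same error budget --- a threshold at $\sqrt{\lambda}$, with $\lambda\gg\sigma^2$ so that atom samples are retained, and a snap-error term $O(\lambda^{3/2})=O(\sigma^{9/4})=o(\sigma^2)$ from continuous samples landing near an atom; the paper's final bound $\sigma^2+M\lambda^{3/2}+\sqrt{2/\pi}M^2\sigma\lambda$ is precisely your budget. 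Where you diverge is the key technical device: you push the limit $b\to\infty$ into the estimator itself, arguing (argmin convergence off the tie set, plus dominated convergence) that $\hat{X}^{1,1,b}$ converges to an explicit hard-thresholding rule, and then run the $\sigma\to 0$ Gaussian computation on that clean object. The paper never passes to a limiting estimator: its \cref{lemma:structure-estimation-iid} establishes the same three behaviors at finite $b$ on the events $\Ec_1,\Ec_2$ (stay within $\sqrt{C_2/b}$ of $Y$ when $Y$ is far from atoms; error at most $\lambda+O(1/b)$ when close; snap to within $2^{-b}$ of the atom when $S\neq 0$ and the noise is small), and then bounds the MSE at finite $(b,\sigma)$ --- including a laborious cross-term analysis of $\EE[\,|X-Y|\,|Y-\hat{X}|\mid S=0]$ via conditional-density bounds --- taking limits only of the final bounds. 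Your route is cleaner, yields the two-sided estimate almost for free (on the no-snap event the error is exactly $Z_0^2$), and isolates the admissible window $\sigma^2\ll\lambda\ll\sigma^{4/3}$; its price is the exchange-of-limits step, which is exactly what the paper's finite-$b$ bookkeeping is designed to avoid. Two remarks on that step: since the optimization is over $u\in(0,1)$ and the source is supported there, the squared error is bounded by one, so your dominating function can be taken constant and the unbounded-support worry is moot; the real caveat is that your ``return $Y_0$'' branch needs the density of $\pi_c$ to be bounded away from zero near $Y_0$ --- the same implicit assumption the paper invokes when it writes $L(\hat{X})\leq L(Y)\leq\lambda$.
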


\begin{theorem}
    \label{thm:recovery-1-markov}
    Suppose that $X$ denotes a 1-Markov process, such that the conditional distribution $X_{i + 1} \mid X_i$
    verifies:
    \begin{equation}
        X_{i + 1} \mid X_i \sim q_0 \pi_c + \sum_{p = 1}^m q_p \delta_{f_p(X_i)},
    \end{equation}
    where $q_p$ are weights such that $\sum_{p = 0}^m q_p = 1$, and $f_p$ denote Lipschitz-continuous
    functions with a given constant $L$.

    Then, for $\lambda = \sigma^{3 / 2}$, the Q-MAP estimator verifies:
    \begin{equation}
        \lim_{n \rightarrow \infty} \lim_{\sigma \rightarrow 0} \lim_{b \rightarrow \infty}
        \frac{1}{\sigma^2} \EE \left(X_{\lfloor n / 2 \rfloor} - \hat{X}^{n, 2, b}_{\lfloor n / 2 \rfloor}\right)^2 = q_0.
    \end{equation}
\end{theorem}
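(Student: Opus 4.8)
The plan is to take the three limits in their stated order, reducing the Q-MAP program first to a change-point (segmentation) problem and then to a per-run error accounting. \emph{Step 1 ($b\to\infty$).} At fixed $\sigma$ and $n$ I would generalize the computation behind Example~\ref{ex:2}: grouping the length-$3$ quantized blocks by transition type, a block whose two transitions are both \emph{functional} (that is, $a_2,a_3$ are the quantized images $f_p(a_1),f_{p'}(a_2)$ for some labels) receives a weight smaller by $1+\gamma$, with $\gamma=O(1/b)$, than a block in which a transition is a \emph{jump} (explained by no $f_p$). The only feature absent from the piecewise-constant case is that a Lipschitz $f_p$ maps one quantization cell into an interval meeting $O(L)$ cells, so the admissible successors of a cell form a set of size $O(mL)$; this affects only the $O(1)$ and $O(1/b)$ corrections, not the leading gap. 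Hence the cost collapses to $\norm{Y^n-u^n}_2^2+\lambda(1+\gamma)N_J(u^n)$, and the low-cost $u^n$ are \emph{piecewise-functional}: on each maximal run between jumps the entire block is the orbit $u_t,f_{p_1}(u_t),f_{p_2}(f_{p_1}(u_t)),\dots$ of a single real seed $u_t$ under a label string.

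\emph{Step 2 ($\sigma\to0$).} I would then show that the minimizer recovers the true jumps and labels with probability tending to $1$, and that the complementary events are negligible even after dividing by $\sigma^2$. The mechanism is the scale separation $\sigma^2\ll\lambda\ll1$ guaranteed by $\lambda=\sigma^{3/2}$. Inserting a spurious jump inside a genuine run adds one free seed, hence lowers the data term by only $O(\sigma^2)$ in expectation, which cannot pay the surcharge $\lambda=\sigma^{3/2}$; deleting a genuine jump forces one seed to explain two independent runs and inflates the data term by $\Theta(\Delta^2)$, where $\Delta=\min_p|X_t-f_p(X_{t-1})|$ is the jump size, while saving only $\lambda$. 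Since the fresh value $X_t$ has bounded density and the points $f_p(X_{t-1})$ are finite in number, $\P(\Delta<\epsilon)=O(m\epsilon)$, so genuine jumps are macroscopic with overwhelming probability; on the rare small-jump or label-confusion events the squared error is bounded by the signal scale, so their contribution to $\sigma^{-2}\EE(\cdot)$ vanishes. The scalar threshold estimates established for the i.i.d.\ case (Theorem~\ref{thm:recovery-iid}) supply exactly the data-term savings and mis-detection bounds needed here.

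\emph{Step 3 (error accounting and $n\to\infty$).} Conditioned on the correct segmentation, each run is fitted by least squares over its single seed, so its reconstruction is the orthogonal projection of $Y$ onto the one-dimensional functional orbit through the true point. The decisive invariance is that the \emph{total} expected squared error of a run equals $\sigma^2$, independently of the run's internal geometry: linearizing at the true seed gives $\hat X_j-X_j\approx g_j'(X_t)(\hat u_t-X_t)$ with $\mathrm{Var}(\hat u_t)=\sigma^2/\sum_j g_j'(X_t)^2$, whence $\sum_{j\in\mathrm{run}}\EE(\hat X_j-X_j)^2=\sigma^2(1+o(1))$ --- one degree of freedom costs exactly one noise variance. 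Summing over the $\approx q_0 n$ runs yields total expected error $q_0 n\,\sigma^2(1+o(1))$. By stationarity of the jump--label--seed process the run containing the central index becomes, as $n\to\infty$, a typical stationary run uninfluenced by the two boundary runs, so the error of $X_{\lfloor n/2\rfloor}$ converges to the spatial average $\frac{1}{n}\sum_j\EE(X_j-\hat X_j)^2\to q_0\sigma^2$. Dividing by $\sigma^2$ gives the claimed limit $q_0$.

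\emph{Main obstacle.} The delicate points are Step~1 combined with the in-expectation control of Step~2. Unlike the piecewise-constant source, the stationary law of $X_0$ here is an implicit mixture of the images of $\pi_c$ under all finite compositions of the $f_p$, and one must check that it carries a density regular enough --- bounded, and satisfying the overlap bound $\P(\Delta<\epsilon)=O(\epsilon)$ --- for the weight asymptotics and the no-missed-jump estimate to hold uniformly. Coincidences among the $f_p$, or a strongly contracting $f_p$ that renders a functional transition nearly indistinguishable from a small jump, are the cases requiring the most care. Equally delicate is upgrading the rare mis-segmentation events from $o(1)$ in probability to $o(\sigma^2)$ in the scaled expectation, since it is $\sigma^{-2}\EE(\cdot)$, not the probability of success alone, that must converge.
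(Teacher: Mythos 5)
Your proposal follows essentially the same route as the paper's proof: your Step 1 is its regularizer lemma (a structured transition costs roughly $\lambda$ less than a jump, up to $O(1/b)$ corrections from the quantization mismatch under Lipschitz $f_p$), your Step 2 is its structure-recovery lemma, with the same scale separation $\sigma^2\log^2(1/\sigma)\ll\lambda=\sigma^{3/2}\ll 1$ ruling out spurious breaks and the same macroscopic-jump event $d(Y_i,Y_{i+1})\gtrsim\sqrt{\lambda}$ ruling out missed ones, and your Step 3 is its oracle least-squares lemma showing each run's single seed costs exactly one $\sigma^2$ of total error. The only (cosmetic) difference is the last averaging step: you pass from the per-run total to the central index by spatial averaging plus stationarity, while the paper evaluates the length-biased run length at the central index directly via $\EE[1/L]=q_0$ --- two equivalent forms of the same renewal identity for geometric run lengths.
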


We expect \cref{thm:recovery-iid} and \cref{thm:recovery-1-markov} to be instances of a more
general theorem which would show that the Q-MAP estimator recovers well-behaved signals
which are structured in the sense of the R\'enyi information dimension at the optimal
rate. We leave the exact statement and proof of such a theorem to future work.

\section{Proofs}\label{sec:proofs}
\subsection{I.I.D. setting}

 In general, we aim to show that for a high enough quantization level, it is
 possible to recover the ``structure'' of the signal with high probability in
 the high-SNR regime, and thus we only incur error on the ``unstructured''
 part of the signal.

To formalize the notion of structure in such a process, we define
discrete-valued process $\Sbbf$ as follows. For $i\in\mathds{Z}$,
$S_i\in{\cal S}\triangleq\{0,1,\ldots,m\}$. If $X_i=x_p$, $p=1,\ldots,m$,
then $S_i=p$. If the value of $X_i$ is drawn from $\pi_c$, $S_i=0$. In other
words, $\PP(X_i=x_p|S_i=p)=1$. Conditioned on $S_i=0$, $X_i$ is drawn
independently from $\pi_c$.

Note that since we have assumed that $k=1$, the denoiser simplifies to a symbol-by-symbol denoiser, where
\begin{equation}
    \hat{X}_i = \argmin_{u \in (0, 1)}[ (Y_i - u)^2 + \frac{\lambda}{b} c_w(u)],
\end{equation}
where by definition $c_w(u)= -\log \pi_b([u]_b)$. Here $\pi_b$ denotes the
distribution of $[X_i]_b$, where $X_i$ is distributed as \eqref{eq:iid-dist}.

\subsubsection{Regularizer}
To prove that the described denoiser is able to recover the structure of
$X^n$, \ie $S^n$, with small probability of error, we first focus on the
limiting behavior of the regularizer, as $b$ grows to infinity.
\begin{lemma}
    \label{lemma:convergence-regularizer-iid}
Let $r_b(u) \triangleq {1\over b}c_w(u)$.      
        For any $b > 0$, and any $ u \in (0, 1)$ with $[u]_b\neq [x_p]_b$, for all $p\in\{1,\ldots,m\}$, we have $r_b(u)\leq 1$ and 
    \begin{align*}
r_b(u) \geq 1 - C_1 2^{-b},
    \end{align*}
    where $C_1=\log\sup_x\pi_c(x)$. 

    Additionally, for any $b>0$, if  $ u \in (0, 1)$ satisfies $[u]_b=[x_p]_b$, for some $p \in \{1, \dotsc, m\}$, then
    \begin{align*}
r_b(u) \leq {C_2\over b},
    \end{align*}
    where $C_2 = \log( {1 \over  \min_{p} q_p })$.
\end{lemma}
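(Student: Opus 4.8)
The plan is to evaluate the quantized source mass $\pi_b([u]_b)$ directly and to split the argument according to whether the quantization cell of $u$ captures one of the atoms $x_1, \dots, x_m$. Writing $a = [u]_b$, the cell is the interval $[a, a + 2^{-b})$, and by \eqref{eq:iid-dist} the mass decomposes into a continuous and an atomic contribution,
\[
\pi_b([u]_b) = q_0 \int_a^{a + 2^{-b}} \pi_c(x)\, dx + \sum_{p = 1}^m q_p \, \ind_{x_p \in [a, a + 2^{-b})}.
\]
Since $r_b(u) = \frac{1}{b} c_w(u) = -\frac{1}{b}\log \pi_b([u]_b)$ and $-\log$ is decreasing, every bound on $r_b$ follows from a one- or two-sided estimate of this cell mass, and the two regimes of the lemma correspond exactly to the presence or absence of the indicator term.

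First I would dispose of the structured case $[u]_b = [x_p]_b$. Here the cell contains $x_p$, so the atomic term alone gives $\pi_b([u]_b) \ge q_p \ge \min_{p} q_{p}$, the continuous term only increasing the mass further. Applying $-\log$ yields $-\log \pi_b([u]_b) \le \log(1/\min_{p} q_{p}) = C_2$, and dividing by $b$ gives $r_b(u) \le C_2 / b$. This half is immediate and needs no control of $\pi_c$; note that it does not even require the cell to contain a single atom, since additional atoms only enlarge the mass.

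For the unstructured case the indicator term vanishes, so $\pi_b([u]_b) = q_0 \int_a^{a + 2^{-b}} \pi_c$. The lower bound on $r_b$ is the easy direction: boundedness of the density gives $\int_a^{a + 2^{-b}} \pi_c \le 2^{-b} \sup_x \pi_c(x) = 2^{C_1 - b}$, so $\pi_b([u]_b) \le 2^{C_1 - b}$ (using $q_0 \le 1$), hence $-\log \pi_b([u]_b) \ge b - C_1$ and $r_b(u) \ge 1 - C_1 / b$. This shows $r_b(u) \to 1$ from below and is the key quantitative input: the regularizer charges a penalty of essentially $\lambda$ to any symbol whose quantized value avoids all atoms.

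The main obstacle is the complementary upper bound $r_b(u) \le 1$, which is equivalent to the lower mass estimate $\pi_b([u]_b) \ge 2^{-b}$, i.e. $q_0 \int_a^{a + 2^{-b}} \pi_c \ge 2^{-b}$. Unlike the $\sup$-based estimate above, this cannot come from boundedness of the density alone; it requires controlling the continuous mass from below on every atom-free cell, which is delicate precisely where $\pi_c$ is small or vanishes. The natural route is to complement the hypothesis with a lower bound on $\pi_c$ on its support, so that $\int_a^{a + 2^{-b}} \pi_c \ge c\, 2^{-b}$ for some constant $c > 0$; this yields a bound of the form $r_b(u) \le 1 + O(1/b)$, sharpening to the clean inequality $r_b(u) \le 1$ exactly when $q_0 \inf_x \pi_c(x) \ge 1$ on the support. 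I would therefore isolate this density-control step as the crux, and check its compatibility with the absolute continuity and boundedness assumptions invoked in the rest of the argument.
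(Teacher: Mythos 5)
Your treatment of the two quantitative bounds follows the paper's argument essentially verbatim: the same decomposition of the cell mass $\pi_b([u]_b)$ into a continuous part $q_0\int_{[u]_b}^{[u]_b+2^{-b}}\pi_c(x)\,dx$ and an atomic part $\sum_{p} q_p\ind_{x_p\in[[u]_b,[u]_b+2^{-b})}$, the same sup-density estimate in the atom-free case (the paper phrases it through the mean value theorem, writing the integral as $2^{-b}\pi_c(u^*)$, which is equivalent to your direct bound), and the same observation that an atom in the cell forces mass at least $\min_p q_p$, giving $r_b(u)\le C_2/b$. One cosmetic point: you prove $r_b(u)\ge 1-C_1/b$, which is exactly what the paper's proof establishes and what is invoked later in Lemma~\ref{lemma:structure-estimation-iid}; the exponent $1-C_1 2^{-b}$ appearing in the lemma statement is evidently a typo for $1-C_1/b$.

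Your diagnosis of the claim $r_b(u)\le 1$ is also correct, and is worth stating plainly: the paper's own proof never addresses that inequality at all --- it proves only the lower bound in the atom-free case and the upper bound in the atomic case. As you note, $r_b(u)\le 1$ is equivalent to $\pi_b([u]_b)\ge 2^{-b}$, which fails under the stated hypotheses: take $\pi_c$ uniform on $(0,1)$ and $q_0=\tfrac12$; then every atom-free cell has mass $2^{-b-1}$, so $r_b(u)=1+\tfrac1b>1$. The claim survives only in the weakened form $r_b(u)\le 1+C/b$ under an additional lower bound on $q_0\pi_c$ over its support (your suggested fix), and the weakened form is in fact all the downstream argument needs: the only place $r_b\le 1$ is used is in the proof of Lemma~\ref{lemma:structure-estimation-iid}, where $L(\hat X)\le L(Y)=\lambda r_b(Y)\le\lambda$ becomes $L(\hat X)\le\lambda(1+C/b)$, and the extra $O(\lambda/b)$ term is absorbed in the limit $b\to\infty$ taken there. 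So your proposal proves exactly what is provable from the stated assumptions and correctly isolates the defect in the lemma; the paper's proof has the same gap but passes over it silently.
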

In particular, we see that in the limit, $r_b(u)$ is the indicator of whether
the given value has non-zero probability under the prior distribution. This
is similar to $\ell_0$ regularization (or hard-thresholding) for sparse signals,
which is known to be optimal.

\begin{proof}
    Let $\QuantInt{u}{b}$ denote the interval $\QuantInt{u}{b} = [\Quant{u}{b}, \Quant{u}{b} + 2^{-b})$. By definition, 
    \begin{align*}
        \pi_b(\Quant{u}{b})
        &= \PP(X \in \QuantInt{u}{b})  \\
        &= q_0 \int_{\Quant{u}{b}}^{\Quant{u}{b} + 2^{-b}} \pi_c(x) \, dx 
 + \sum_{p = 1}^m q_p \ind_{x_p \in \QuantInt{u}{b}} \\
        &= 2^{-b} \pi_c(u^*) + \sum_{p = 1}^m q_p \ind_{x_p\in \QuantInt{u}{b}},
    \end{align*}
    where the last step holds by the   the mean value theorem for some $u^*\in \QuantInt{u}{b}$.
    
    First, assume that  $[u]_b\neq [x_p]_b$, for all $p\in\{1,\ldots,m\}$. 
    This implies that $\ind_{x_p\in \QuantInt{u}{b}}=0$, for all $p$. Therefore, 
    \begin{equation}
        r_b(u) =  \frac{b-\log \pi_c(u^*)}{b} \geq 1 - C_1 / b.
    \end{equation}
 On the other hand, assume that  $ u \in (0, 1)$ satisfies $[u]_b=[x_p]_b$, for some $p$.  Then, $\pi_b(\Quant{u}{b})=2^{-b} \pi_c(u^*)+q_p$. Therefore, 
    \begin{equation}
        r_b(u) \leq - \frac{1}{b} \log( 2^{-b}\pi_c(u^*) + q_p) \leq \frac{\log( {1 \over  q_p})}{b}\leq {C_2\over b}.
    \end{equation}
\end{proof}

\subsubsection{Structure Estimation}

Given the previous remark on the regularizer, we show that this is enough
for the estimator $\hat{X}$ to recover the structure $S$. More precisely,
we have the following result. In the following, we have dropped subscript
$i$, and refer to $X_i$, $Y_i$ and $S_i$, as $X$, $Y$, $S$, respectively.
Also, in the following, let
\[
d(u) \triangleq \min_{p\in\{1,\ldots,m\}} \abs{u - x_p}.
\] 
That is, $d(u)$ measures the  distance between $u$ and closest singularity point of the $\pi$. 
In the next lemma, $C_2$ refers to the constant defined in \cref{lemma:convergence-regularizer-iid}.
\begin{lemma}    \label{lemma:structure-estimation-iid}
  Define events
    \begin{align}
    \Ec_1 = \{ d(Y) \geq \sqrt{\lambda} + 2^{-b}\},\label{eq:def-E1}
    \end{align}
    and
       \begin{align}
    \Ec_2 = \{ \abs{Z} \leq \sigma \log (1 / \sigma) \}.\label{eq:def-E2}
    \end{align}
    Then, conditioned on $\Ec_1\cap\{ S = 0 \}$, $d(\hat{X}) \geq 2^{-b}$, and
    $\abs{Y - \hat{X}} \leq \sqrt{C_2 / b}$. Additionally, conditioned on $\Ec_1^c \cap \{ S = 0 \}$,
    we have $\abs{Y - \hat{X}}^2 \leq \lambda +\gamma$, where $\gamma=O(b^{-1})$. On the other hand, conditioned on
    $\Ec_2\cap \{S=p\}$,
    \begin{align*}
        \abs{\hat{X} - x_p} \leq 2^{-b}.
    \end{align*}
\end{lemma}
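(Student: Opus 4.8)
The plan is to treat the symbol-by-symbol objective $g(u)\triangleq (Y-u)^2+\lambda r_b(u)$ and to exploit the fact, established in \cref{lemma:convergence-regularizer-iid}, that $r_b$ is essentially a two-valued penalty: it is at most $C_2/b$ on every quantization bin that contains a spike $x_p$, and at least $1-C_1/b$ on every other (spike-free) bin. Since $r_b(u)$ depends on $u$ only through its bin $[u]_b$, minimizing $g$ amounts to choosing a bin and then taking the point of that bin closest to $Y$. Consequently each assertion reduces to comparing the optimal value $g(\hat{X})$ against $g$ evaluated at one well-chosen admissible competitor: $u=Y$ in the off-spike/far regime, the nearest spike $x_{p^\ast}$ in the off-spike/near regime, and the true spike $u=x_p$ in the on-spike regime.

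For the first assertion I would work on $\Ec_1\cap\{S=0\}$, where $Y$ lies at distance at least $\sqrt\lambda+2^{-b}$ from every spike. Using $u=Y$ (which sits in a spike-free bin) as competitor gives $g(\hat{X})\le g(Y)=\lambda r_b(Y)\le\lambda$. If we had $d(\hat{X})<2^{-b}$, the triangle inequality would force $\abs{Y-\hat{X}}>\sqrt\lambda$, hence $g(\hat{X})\ge(Y-\hat{X})^2>\lambda$, a contradiction; this proves $d(\hat{X})\ge 2^{-b}$. But $d(\hat{X})\ge 2^{-b}$ places $\hat{X}$ in a spike-free bin, so $r_b(\hat{X})\ge 1-C_1/b$, and then $(Y-\hat{X})^2=g(\hat{X})-\lambda r_b(\hat{X})\le \lambda C_1/b$, which is below $C_2/b$ once $\sigma$ (hence $\lambda$) is small, yielding $\abs{Y-\hat{X}}\le\sqrt{C_2/b}$.

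The remaining two assertions follow the same template with different competitors. On $\Ec_1^c\cap\{S=0\}$ the observation $Y$ is within $\sqrt\lambda+2^{-b}$ of some spike $x_{p^\ast}$; taking $u=x_{p^\ast}$ gives $g(\hat{X})\le (Y-x_{p^\ast})^2+\lambda r_b(x_{p^\ast})\le(\sqrt\lambda+2^{-b})^2+\lambda C_2/b$, and since $g(\hat{X})\ge(Y-\hat{X})^2$ this is precisely $\abs{Y-\hat{X}}^2\le\lambda+\gamma$ with $\gamma=2\sqrt\lambda\,2^{-b}+2^{-2b}+\lambda C_2/b=O(b^{-1})$. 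On $\Ec_2\cap\{S=p\}$ we have $\abs{Y-x_p}=\abs{Z}\le\sigma\log(1/\sigma)$, so $u=x_p$ gives the very small value $g(\hat{X})\le\sigma^2\log^2(1/\sigma)+\lambda C_2/b$. I would then rule out every bin other than the one containing $x_p$: a spike-free bin would force $g(\hat{X})\ge\lambda(1-C_1/b)$, and a bin around a different spike $x_q$ would force $\abs{Y-\hat{X}}$ to be at least $\min_{q\neq p}\abs{x_p-x_q}-\sigma\log(1/\sigma)-2^{-b}$ and hence $g(\hat{X})$ bounded below by a positive constant. Both lower bounds exceed the competitor value for small $\sigma$ and large $b$, so $[\hat{X}]_b=[x_p]_b$ and $\abs{\hat{X}-x_p}\le 2^{-b}$.

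The step I expect to be most delicate is the calibration of $\lambda$, which is exactly why $\lambda=\sigma^{3/2}$ is forced. The on-spike exclusion needs $\lambda$ to dominate the residual noise energy $\sigma^2\log^2(1/\sigma)$, so that escaping the correct spike bin is too expensive; the off-spike arguments simultaneously need $\lambda$ small enough that $\hat{X}$ tracks $Y$ within $\sqrt\lambda$, so that the eventual residual matches the optimal $\sigma^2$. The choice $\lambda=\sigma^{3/2}$ sits in the admissible window $\sigma^2\log^2(1/\sigma)\ll\lambda\ll 1$. Two secondary technicalities must be handled: the competitor $u=Y$ may leave the admissible interval $(0,1)$, requiring a replacement by its projection (or the nearest spike-free point in $(0,1)$), and the wrong-spike exclusion relies on a positive minimum separation between spikes, which holds automatically since $x_1,\dotsc,x_m$ are finitely many distinct points in $(0,1)$.
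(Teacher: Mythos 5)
Your proposal is correct and takes essentially the same route as the paper: both compare the optimal objective value against well-chosen competitors ($u=Y$ on $\Ec_1\cap\{S=0\}$, the nearest spike on $\Ec_1^c\cap\{S=0\}$, and a point in the bin of $x_p$ on $\Ec_2\cap\{S=p\}$), exploiting the two-valued behavior of $r_b$ from \cref{lemma:convergence-regularizer-iid}. If anything, your version is slightly more complete than the paper's: your wrong-spike exclusion (via the minimum separation of the finitely many points $x_1,\dotsc,x_m$) is exactly what is needed to upgrade the conclusion $d(\hat{X})\leq 2^{-b}$ (proximity to \emph{some} spike, which is all the paper's contradiction argument actually yields) to the claimed $\abs{\hat{X}-x_p}\leq 2^{-b}$, and your remark about projecting the competitor $Y$ back into $(0,1)$ addresses a boundary technicality the paper glosses over.
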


\begin{proof}

    Let $L(u) = (Y - u)^2 + \lambda r_b(u)$ and $\Xh=\argmin_u L(u)$. 
    First, suppose that $S = 0$. 
By the triangle inequality, for any $p\in\{1,\ldots,m\}$, 
    \begin{equation}
       \sqrt{ L(\hat{X})} \geq \abs{Y - \hat{X}} \geq \abs{Y - x_p} - \abs{x_p - \hat{X}}.
    \end{equation}
But,  $L(\hat{X}) \leq L(Y)=\lambda r_b(Y) \leq \lambda.$ Therefore, 
\begin{align}
\abs{Y - x_p} - \abs{x_p - \hat{X}}\leq \sqrt{\lambda}.\label{eq:Y-xp-dist}
\end{align}
    Conditioned on $\Ec_1\cap\{S=0\}$, from \eqref{eq:Y-xp-dist}, $\abs{x_p - \hat{X}}\geq \sqrt{\lambda} + 2^{-b}-\sqrt{\lambda}=2^{-b}$. 
Additionally, this implies by \cref{lemma:convergence-regularizer-iid} that it suffices to optimize over
    values of $u$ for which $r_b(u) \geq 1 - C_2 / b$. Thus, we have that $L(u) \geq (Y - u)^2 + \lambda(1 - C_2 / b)$. But,
    $L(Y) \leq \lambda$, from which we deduce that $(Y - \hat{X})^2 \leq \lambda C_2 / b$.

    Additionally, note that on the event $\Ec_1^c$, we have $\abs{Y - x_p} \leq \sqrt{\lambda} + 2^{-b}$, for
    some $p \in \{1, \dotsc, k\}$. Now,  by \cref{lemma:convergence-regularizer-iid},
    $r_b(x_p) \leq C_2 / b$. Therefore, conditioned on $\Ec_1^c\cap\{S=0\}$, 
    \begin{align*}
    L(x_p) &\leq (Y - x_p)^2 + {\lambda C_2 \over  b}\nonumber\\
    &\leq  ( \sqrt{\lambda}  + 2^{-b})^2 + {\lambda C_2 \over b}.
   \end{align*}
    By optimality of $\hat{X}$, we have that $L(\hat{X}) \leq L(x_p)$, from which we conclude that
    \begin{equation*}
        (Y - \hat{X})^2 \leq ( \sqrt{\lambda}  + 2^{-b})^2 + {\lambda C_2 \over  b}=\lambda+\gamma.
    \end{equation*}
    where $\gamma=2^{-b+1}\sqrt{\lambda}  + 2^{-2b} + {\lambda C_2 \over  b}=O(1/b)$.

    To prove the next part, suppose that $S \neq 0$, and suppose by contradiction that $d(\hat{X}) >  2^{-b}$.
    Then by \cref{lemma:convergence-regularizer-iid},  $r_b(\hat{X}) > 1 - C_1 / b$,
    and therefore, $L(\hat{X}) \geq \lambda (1 - C_1 / b)$.
 
 Additionally, since $S\neq 0$,  there exists $\tilde{X}$ such that $\abs{X - \tilde{X}} < 2^{-b}$, and 
 by \cref{lemma:convergence-regularizer-iid}, $r_b(\tilde{X}) < C_2 / b$. Conditioned on $\Ec_2$, 
     \begin{equation*}
        L(\tilde{X}) \leq (\sigma \log (1 / \sigma) + 2^{-b})^2 + \lambda C_2 / b.
    \end{equation*}
Now, note that $\lambda = \sigma^{3 / 2} \geq \sigma^2 \log(1 / \sigma)$ for $\sigma$ small.
We thus have that $L(\tilde{X}) < L(\hat{X})$ for $b$ large enough, which is a contradiction.
\end{proof}

\subsubsection{Estimation}

Finally, we may combine the previous result with a conditional analysis of the error
(depending on whether $S = 0$) to obtain the final result.
\begin{proof}[Proof of \Cref{thm:recovery-iid}]
First, define events $\Ec_1$ and $\Ec_2$ as \eqref{eq:def-E1} and
\eqref{eq:def-E2}, respectively. Note that, since for $S=0$, $Y=X+Z$ has a
continuous distribution, $\PP(\Ec_1) \rightarrow \PP(S=0)=q_0$. Also,
$\PP(\Ec_2^c) \rightarrow 0$, as $\sigma \rightarrow 0$.

First, note that
    \begin{align*}
        \EE[(X - \hat{X})^2]
        &= \EE[(X - \hat{X})^2 \mid S = 0] \PP(S = 0) \\
        &\quad + \sum_{p = 1}^m \EE[(X - \hat{X})^2 \mid S = p] \PP(S = p).
    \end{align*}
For $S \neq 0$, from \cref{lemma:structure-estimation-iid},
    \begin{equation*}
        \EE[(X - \hat{X})^2 \mid S = p] \leq  2^{-2b} +  \PP(\Ec_2^c),
    \end{equation*}
    where the last line follows because $X$ is bounded by one. Now, $\PP(\Ec_2^c) \leq 2 \sigma^{-\log \sigma / 2}$ by
    the Gaussian tail bound, and in particular we have $\frac{1}{\sigma^2}\EE[(X - \hat{X})^2 \mid S = p] \rightarrow 0$
    as $b \rightarrow \infty$ and $\sigma \rightarrow 0$.

    On the other hand, 
    \begin{align*}
      &  \EE[(X - \hat{X})^2 \mid S = 0]
        \leq \EE[(\abs{X - Y} + \abs{Y - \hat{X}})^2 \mid S = 0] \\
        &\leq \EE[(X - Y)^2 \mid S = 0]  + 2\EE[(\abs{X - Y}\abs{Y - \hat{X}}) \mid S = 0] \\
        &\quad + \EE[(Y - \hat{X})^2 \mid S = 0].
    \end{align*}
Since $Y=X+Z$, $\EE[(X - Y)^2 \mid S = 0] =\EE[Z^2]= \sigma^2$.     Also, 
    \begin{align*}
 &\EE[(\abs{X - Y}\abs{Y - \hat{X}}) \mid S = 0]  \\
 &= \EE[(\abs{X - Y}\abs{Y - \hat{X}}) \mid S = 0,\Ec_1]\PP(\Ec_1\mid S=0) \\
 &\quad +\EE[(\abs{X - Y}\abs{Y - \hat{X}}) \mid S = 0,\Ec_1^c]\PP(\Ec_1^c\mid S=0)
\end{align*}
From \cref{lemma:structure-estimation-iid}, conditioned on
$\Ec_1\cap \{ S = 0 \}$, we have $\abs{\hat{X} - Y} \leq \sqrt{C_2 / b}$,
and conditioned on $\Ec_1^c\cap\{S=0\}$, we have $\abs{\hat{X} - Y} \leq \sqrt{\lambda + \gamma}$, where $\gamma=O(1/b)$.
Therefore,
\begin{align}
&\EE[(\abs{X - Y}\abs{Y - \hat{X}}) \mid S = 0]\nonumber\\
&\leq \sqrt{C_2 \over b} \EE[\abs{X - Y} \mid S = 0,\Ec_1]\PP(\Ec_1\mid S=0)\nonumber\\
&+ \sqrt{\lambda +\gamma}\; \EE[(\abs{X - Y}) \mid S = 0,\Ec_1^c]\PP(\Ec_1^c\mid S=0).\label{eq:error-two-terms-multiplied}
\end{align}
Now, note that conditioned on $S = 0$, $Y = X + Z$ has a continuous density which is uniformly
upper bounded for any $\sigma > 0$ (as $X$ is continuous on the event $S = 0$), and hence, by the
mean value theorem, we have that 
\begin{align}
\PP(\Ec_1^c \mid S = 0) \leq (\sqrt{\lambda} + 2^{-b})M,\label{eq:bound-P-E1c-S0}
\end{align}
for some $M > 0$ .
In particular, we  conclude that:
\begin{align}
&\EE[(\abs{X - Y}\abs{Y - \hat{X}}) \mid S = 0] \nonumber \\
&\leq \sqrt{C_2 \over b} \EE[\abs{X - Y} \mid S=0)
+ M(\lambda + \gamma_1) \EE[\abs{X - Y} \mid S = 0, \Ec_1^c].
\end{align}
where $\gamma_1 = o(b)$. Now, note that we have:
\begin{equation*}
    \EE[\abs{X - Y} \mid S = 0, \Ec_1^c] = \sigma \EE[\abs{Z} \mid S = 0, \Ec_1^c].
\end{equation*}
As $X$ has continuous distribution on the event $S = 0$, the event $\Ec_1^c$ does
not substantially affect the distribution of $Z$. For illustration, we give a proof
when there is a single singularity point $x_1$, in which case $\Ec_1^c = \{ \abs{Y - x_1} \geq \sqrt{\lambda} + 2^{-b} \}$.
Note that for any $\delta > 0$, we have:
\begin{equation*}
    \pi(z \mid \abs{X + \sigma Z - x_1} \leq \delta) =
    \frac{\pi(z) \int_{-\delta - x_1 - \sigma z}^{\delta - x_1 - \sigma z} \pi(x) \, dx}{\PP(\abs{X + \sigma Z - x_1} \leq \delta)}.
\end{equation*}
By the mean value theorem, we have that
\begin{gather*}
\int_{-\delta - x_1 - \sigma z}^{\delta - x_1 - \sigma z} \pi(x) \, dx = 2 \delta \pi(x^*)
\end{gather*}
where $x^* \in (-\delta - x_1 - \sigma z, \delta - x_1 - \sigma z)$. 
Additionally, note that $Y - x_1$ has a continuous
density uniformly bounded above and below for any $\sigma$, and hence there exists $m, M > 0$ such that
\begin{equation*}
    2m \delta \leq \PP(\abs{X + \sigma Z - x_1} \leq \delta) \leq 2M\delta.
\end{equation*}
We thus deduce that there exists $m, M > 0$, such that for any $\sigma$:
\begin{equation*}
    m\pi(z) \leq \pi(z \mid \abs{X + \sigma Z - x_1} \leq \delta) \leq M \pi(z).
\end{equation*}
In particular, we have that $\EE[\abs{Z} \mid S = 0, \Ec_1^c] \leq M \EE[\abs{Z}] \leq M\sigma \sqrt{2 / \pi}$. Substituting this in our previous result, it follows that
\begin{equation*}
\EE[(\abs{X - Y}\abs{Y - \hat{X}}) \mid S = 0] \leq \sigma\left[\sqrt{2C_2 \over \pi b} + M^2(\lambda + \gamma_1) \sqrt{2 \over \pi}\right],
\end{equation*}
where $\gamma_1=o(b)$.

Following a similar argument as the one used in bounding $\EE[(\abs{X - Y}\abs{Y - \hat{X}}) \mid S = 0]$, we have
\begin{align}
\EE[(\abs{Y - \hat{X}}^2)& \mid S = 0]\leq {C_2 \over b} +(\lambda +\gamma) \PP(\Ec_1^c\mid S=0)\nonumber\\
&\leq {C_2 \over b} +M(\lambda +\gamma) (\sqrt{\lambda} + 2^{-b}),
\end{align}
where the last line follows from \eqref{eq:bound-P-E1c-S0}. 
    Putting all the terms together, it follows that
    \begin{align}
        \EE[(X - \hat{X})^2 \mid S = 0]& \leq \sigma^2+ 2\sigma\left[\sqrt{2C_2 \over \pi b} + 2\sqrt{2 \over \pi}M^2(\lambda + \gamma_1) \right] \nonumber\\ 
&\quad+{C_2 \over b} +M(\lambda +\gamma) (\sqrt{\lambda} + 2^{-b}).
    \end{align}

    In particular, taking the limit $b \rightarrow \infty$, we have that
    $\EE[(X - \hat{X})^2 \mid S = 0] \rightarrow \sigma^2 + M\lambda^{3\over 2}+  \sqrt{2 \over  \pi} M^2\sigma \lambda$. Finally, since $\lambda=\sigma^{3/2}$, we have
    \begin{align*}
&        \lim_{\sigma \rightarrow 0} \lim_{b \rightarrow \infty} \frac{1}{\sigma^2}\EE[(X - \hat{X})^2]\\
        &= \lim_{\sigma \rightarrow 0, \ b \rightarrow \infty} \frac{q_0}{\sigma^2} \EE[(X - \hat{X})^2 \mid S = 0] \\
        &\;\;\;\;\;\; + (1 - q_0) \lim_{\sigma \rightarrow 0, \, b \rightarrow \infty} \frac{1}{\sigma^2}\EE[(X - \hat{X})^2 \mid S \neq 0] \\
        &= q_0.
    \end{align*}
\end{proof}

\subsection{1-Markov Setting}

The 1-Markov setting shares similarities with the i.i.d. setting, but is significantly more
complex as the estimator is no longer separable. Nevertheless, the basic intuition remains
similar, in that we will show that the estimator is able to capture the ``structure'' with
high probability, and given that, it is able to recover the signal at the claimed rate.

In a similar fashion to the i.i.d. case, we define process $\Sbbf$ taking values in $\{ 0, \dotsc, m \}$ to capture the structure of process $\Xbbf$. $S_i=p$, if $X_{i+1}=f_p(X_i)$.  If $X_{i+1}$ is drawn from $\pi^c$, independent of $X_i$, then $S_i=0$. Therefore,  $\PP(S_i = p) = q_p$.  
In this section,  let $\QuantInt{u}{b}$ denote the interval $[\Quant{u}{b}, \Quant{u}{b} + 2^{-b})$.

\subsection{Regularizer}

The regularizer presents properties similar to the i.i.d. case, but is complicated
by the fact that the quantization need not match between the domain and range of
the functions which induce the structure. We thus have the following lemma.

First, we consider a basic lemma on the structure of the regularizer.
\begin{lemma}[Convergence of Regularizer]
    \label{lemma:convergence-regularizer-markov}
    Let $\diststru(u_1, u_2) = \min_{p \in \{1, \dotsc, k\}} \abs{u_2 - f_p(u_1)}$.

    For any bitrate $b$, for  $u_1, u_2 \in (0, 1)$ with
    $\diststru(u_1, u_2) \geq C2^{-b}$, we have:
    \begin{equation}
        \regquant(u_1, u_2) \geq 2 - C / b
    \end{equation}
    where $C$ is an absolute constant.
    Additionally, for any $b$, $u_1 \in (0, 1)$ and $p \in \{1, \dotsc p\}$, there
    exists $u_2 \in (0, 1)$ such that $\abs{u_2 - f_p(u_1)} \leq C2^{-b}$ and
    \begin{equation}
        \regquant(u_1, u_2) \leq 1 + C / b.
    \end{equation}
    
    Finally,  for any bitrate $b$, and for any $u_1, u_2 \in (0, 1)$, we have:
    \begin{equation}
        1 - C / b \leq \regquant(u_1, u_2) \leq 2 + C / b.
    \end{equation}
\end{lemma}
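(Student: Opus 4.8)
The plan is to read $\regquant(u_1,u_2)$ as the pairwise analogue of the i.i.d.\ regularizer, namely $\regquant(u_1,u_2)=-\frac{1}{b}\log \mu_b(u_1,u_2)$ with $\mu_b(u_1,u_2)\triangleq\PP([X_1]_b=[u_1]_b,\,[X_2]_b=[u_2]_b)$ the law of a quantized consecutive pair of the stationary chain, and then to repeat the decomposition from the proof of \cref{lemma:convergence-regularizer-iid} one level up. Writing $I_j=\QuantInt{u_j}{b}$ and letting $g$ be the stationary marginal density of $X_1$, the Markov kernel gives
\begin{equation}
\mu_b(u_1,u_2)=q_0\Big(\int_{I_1}g\Big)\Big(\int_{I_2}\pi_c\Big)+\sum_{p=1}^m q_p\int_{I_1}g(x)\,\ind_{f_p(x)\in I_2}\,dx .
\end{equation}
The first (``fresh-draw'') term has order $2^{-2b}$ and each active term in the sum has order $2^{-b}$; these two scales are exactly what yield the target values $2$ and $1$ after dividing $-\log(\cdot)$ by $b$. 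Throughout I use that $g$ and $\pi_c$ are bounded above and below by positive constants on $(0,1)$; since $q_0>0$ injects a fresh $\pi_c$-draw at every step, the stationary law is absolutely continuous, and I take this two-sided density bound as the regularity hypothesis to be verified.

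I would first dispose of the two soft estimates. For the global sandwich $1-C/b\le\regquant\le 2+C/b$, the lower bound is immediate from $\mu_b(u_1,u_2)\le\PP([X_1]_b=[u_1]_b)=\int_{I_1}g\le 2^{-b}\sup g$, and the upper bound from keeping only the fresh-draw term, $\mu_b(u_1,u_2)\ge q_0\,2^{-2b}\,\inf g\,\inf\pi_c$. For the first displayed inequality I would use the Lipschitz bound to switch off every point mass: if $\diststru(u_1,u_2)\ge C2^{-b}$ with $C$ large enough relative to $L$, then for every $x\in I_1$ and every $p$ one has $\abs{f_p(x)-u_2}\ge\diststru(u_1,u_2)-L2^{-b}$, which exceeds $2^{-b}$ once $C>L+1$, so $f_p(x)\notin I_2$; the sum vanishes, only the $2^{-2b}$ term remains, and hence $\regquant(u_1,u_2)\ge 2-C/b$.

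The main obstacle is the existence claim: given $u_1$ and a structure index $p$, I must exhibit a $u_2$ close to $f_p(u_1)$ whose active term is \emph{genuinely} of order $2^{-b}$. The difficulty is the domain/range quantization mismatch flagged before the lemma: $f_p(I_1)$ is an interval of width at most $L2^{-b}$ that is neither aligned with a cell nor, when $L>1$, contained in one, so taking $u_2=f_p(u_1)$ need not place a full $2^{-b}$ of preimage mass in a single $I_2$. I would resolve this by a covering/pigeonhole argument: $f_p(I_1)$ meets at most $\lceil L\rceil+1$ quantization cells, the preimages of these cells partition $I_1$, so some cell $I_2^\star$ satisfies $\abs{\{x\in I_1:f_p(x)\in I_2^\star\}}\ge 2^{-b}/(\lceil L\rceil+1)$. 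Picking any $u_2\in I_2^\star$ gives $\abs{u_2-f_p(u_1)}\le(\lceil L\rceil+1)2^{-b}=C2^{-b}$ and, dropping all but the $p$-th term,
\begin{equation}
\mu_b(u_1,u_2)\ \ge\ q_p\int_{I_1}g(x)\,\ind_{f_p(x)\in I_2^\star}\,dx\ \ge\ \frac{q_p\,\inf g}{\lceil L\rceil+1}\,2^{-b},
\end{equation}
so that $\regquant(u_1,u_2)\le 1+C/b$.

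Finally I would absorb the constants (coming from $\sup g$, $\inf g$, $\inf\pi_c$, $\min_p q_p$, and $L$) into a single $C$ depending only on the problem data, which closes all three parts of the lemma. The only genuinely delicate step is the covering argument handling the quantization mismatch; the only hypothesis that must be checked rather than used freely is the two-sided bound on the stationary density $g$, which is where ``well-behaved'' enters and which holds in the piecewise-constant example (there $g\equiv 1$ and $f_1=\mathrm{id}$).
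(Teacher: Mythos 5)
Your proof is correct and follows essentially the same route as the paper's: the same decomposition of the quantized-pair probability $\PP(X_1 \in \QuantInt{u_1}{b}, X_2 \in \QuantInt{u_2}{b})$ over the structure variable $S_1$, the same Lipschitz argument showing every point-mass term vanishes once $\diststru(u_1,u_2) \geq (L+1)2^{-b}$ (leaving only the order-$2^{-2b}$ fresh-draw term), and the same pigeonhole argument over the $O(L)$ quantization cells covering $f_p(\QuantInt{u_1}{b})$ to produce the $u_2$ in the existence claim. The minor differences are to your credit: you explicitly prove the final sandwich bound $1 - C/b \leq \regquant \leq 2 + C/b$, which the paper's proof omits, and you explicitly flag the two-sided bound on the stationary density as a hypothesis to be verified (it genuinely needs verification, e.g.\ it can fail when some $f_p$ is constant), whereas the paper invokes the minimum $m$ and maximum $M$ of that density silently.
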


\begin{proof}
By definition, 
\begin{align*}
    -b r_b(u_1, u_2)
    &= \log \PP(X_1 \in \QuantInt{u_1}{b}, X_2 \in \QuantInt{u_2}{b}) \\
    &= \log \sum_{p = 0}^k \PP(X_1 \in \QuantInt{u_1}{b}, X_2 \in \QuantInt{u_2}{b}, S_1 = p).
\end{align*}
Note that conditioned on $S_1 =0$, $X_1$ and $X_2$ are independent, and hence:
\begin{align*}
    &\PP(X_1 \in \QuantInt{u_1}{b}, X_2 \in \QuantInt{u_2}{b}, S_1 = 0) \\
    &= \PP(X_1 \in \QuantInt{u_1}{b}, X_2 \in \QuantInt{u_2}{b} \mid S_1 = 0) \PP(S_1 = 0) \\
    &= \PP(X_1 \in \QuantInt{u_1}{b} \mid S_1 = 0) \PP(X_2 \in \QuantInt{u_2}{b} \mid S_1 = 0) \PP(S_1 = 0).
\end{align*}
By the mean value theorem, $\PP(X_1 \in \QuantInt{u_1}{b})$ can be bounded as 
\begin{equation*}
    m 2^{-b} \leq \PP(X_1 \in \QuantInt{u_1}{b}) \leq M 2^{-b},
\end{equation*}
where $m, M$ denote the minimum and maximum of the density of
the stationary distribution.
Additionally, note that by the Lipschitz-continuity  assumption, 
\begin{equation}
    f_p(\QuantInt{u_1}{b}) \subseteq (f_p(u_1) - L 2^{-b}, f_p(u_1) + L 2^{-b}).
\end{equation}
In particular, we have that if $\abs{f_p(u_1) - u_2} \geq (L + 1)2^{-b}$, then
\[
\PP(X_1 \in \QuantInt{u_1}{b}, X_2 \in \QuantInt{u_2}{b}, S_1 = p) = 0.
\]
We thus deduce that, if $\abs{f_p(u_1) - u_2} \geq (L + 1)2^{-b}$ for all
$p$, then the events are independent, and hence:
\begin{align*}
    &\PP(X_1 \in \QuantInt{u_1}{b}, X_2 \in \QuantInt{u_2}{b}) \\
    &= \PP(X_1 \in \QuantInt{u_1}{b})\PP(X_2 \in \QuantInt{u_2}{b}) \\
    &\leq M^2 2^{-2b}.
\end{align*}
Substituting back into the definition of $r_b$, we have in that case that:
\begin{equation*}
    r_b(u_1, u_2) \geq 2 - 2b^{-1}\log M.
\end{equation*}
On the other hand, again by the Lipschitz assumption, 
\begin{align*}
    \QuantInt{u_1}{b}
    &\subseteq f_p^{-1} (f_p(u_1) - L 2^{-b}, f_p(u_1) + L 2^{-b}) \\
    &\subseteq \bigcup_{i = -L}^{L} f_p^{-1} \left(\dbracket*{f_p(u_1) + i 2^{-b}}_{b}\right)
\end{align*}
In particular, this implies that there exists an $i^*$ such that:
\begin{equation*}
    \abs*{\QuantInt{u_1}{b} \cap f_p^{-1} \left(\dbracket*{f_p(u_1) + i^* 2^{-b}}_{b}\right)} \geq \frac{2^{-b}}{2L + 1}.
\end{equation*}
Choose $u_2 = f_p(u_1) + i^* 2^{-b}$, then we have that:
\begin{align*}
    \PP&(X_1 \in \QuantInt{u_1}{b}, X_2 \in \QuantInt{u_2}{b}, S^1_p)\\
    &\geq \PP(X_1 \in f_p^{-1}(\QuantInt{u_2}{b}) \cap \QuantInt{u_1}{b}) \\
    &\geq \frac{m 2^{-b}}{2L + 1}.
\end{align*}
We deduce that:
\begin{align*}
    \regquant(u_1, u_2)
    &= - \frac{1}{b} \log \PP(X_1 \in \QuantInt{u_1}{b}, X_2 \in \QuantInt{u_2}{b}) \\
    &\leq 1 + \frac{1}{b}\abs*{\log \frac{m}{2L + 1}}.
\end{align*}

\end{proof}

\subsection{Structure Estimation}

As the regularizer is related to a structure indicator, we expect our
estimator $\hat{X}^n$ to reflect the structure of the data in the low noise regime.
More precisely, we have the following lemma.
\begin{lemma}
    \label{lemma:structure-recovery-markov}
    Let $d(u_1, u_2) = \min_{p} \abs{u_2 - f_p(u_1)}$.
    Consider events:
     \begin{align*}
        \Ec_1 &= \{d(Y_i, Y_{i + 1}) \geq \sqrt{2\lambda} + 2^{-b} \text{ or } S_i \neq 0, i=1,\ldots,n \}, \\
        \Ec_2 &= \{ \abs{\sum_{i = 1}^n Z_i} \leq n\sigma \log (1 / \sigma) \}.
    \end{align*}
    Then, conditioned on   $\Ec_1 \cap \Ec_2$, we have:
    \begin{align*}
        S_i = 0 &\Rightarrow d(\hat{X}_{i + 1}, X_i) \geq 2^{-b}, \\
        S_i \neq 0 &\Rightarrow \abs{\hat{X}_{i + 1} - f_{S_i}(\hat{X}_i)} \leq C2^{-b}.
    \end{align*}
\end{lemma}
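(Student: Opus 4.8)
The plan is to analyze the global minimizer $\hat X^n$ of
\[
L(u^n)=\sum_{i=1}^{n}(Y_i-u_i)^2+\lambda\sum_{i=1}^{n-1}r_b(u_i,u_{i+1}),
\]
conditioning throughout on $\Ec_1\cap\Ec_2$, and to read off the structure of $\hat X^n$ from the trichotomy of \cref{lemma:convergence-regularizer-markov}: a recovered transition lying off all curves ($d(\hat X_i,\hat X_{i+1})\ge C2^{-b}$) is charged $r_b\ge 2-C/b$, a transition snapped onto some $f_p$ is charged $r_b\le 1+C/b$, and always $r_b\in[1-C/b,2+C/b]$. Thus each transition is worth a definite $\approx\lambda$ in the objective, and the argument becomes an accounting of which transitions $\hat X^n$ can afford to leave unstructured. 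The essential new difficulty compared with \cref{lemma:structure-estimation-iid} is that $L$ is \emph{not} separable: changing one coordinate $\hat X_i$ perturbs the two incident regularizer terms $r_b(\hat X_{i-1},\hat X_i)$ and $r_b(\hat X_i,\hat X_{i+1})$, so no single-coordinate exchange can be strictly improving, and the competitors I build against $\hat X^n$ must act on whole blocks.

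First I would decompose $\{1,\dots,n\}$ according to $\Sbbf$ into maximal structured runs $[a,c]$ (with $S_a,\dots,S_{c-1}\neq 0$) separated by unstructured transitions ($S_i=0$), and record a global upper bound from an oracle $\tilde u^n$ that follows the true orbit on each run and stays within $C2^{-b}$ of $X^n$. Using the second clause of \cref{lemma:convergence-regularizer-markov}, this yields $L(\hat X^n)\le L(\tilde u^n)\le\norm{Z^n}^2+\lambda[(n-1)+N_0]+O(\lambda n/b)+O(2^{-b}\sum_i|Z_i|)$, where $N_0=\#\{i:S_i=0\}$. Together with the lower bound $L(\hat X^n)\ge\lambda\sum_i r_b(\hat X_i,\hat X_{i+1})$ this already shows that the number of off-curve recovered transitions exceeds $N_0$ by at most $\norm{Z^n}^2/\lambda+O(n/b)$, which is $o(n)$ once $\lambda=\sigma^{3/2}$, $\sigma\to 0$ and $b\to\infty$; that is, $\hat X^n$ matches the true structure at all but a vanishing fraction of transitions.

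Upgrading this aggregate statement to the deterministic per-$i$ claim is where $\Ec_1$ and $\Ec_2$ are used, and I would do it in two stages whose order matters. For an unstructured transition $S_i=0$, $\Ec_1$ gives $d(Y_i,Y_{i+1})\ge\sqrt{2\lambda}+2^{-b}$, and the Lipschitz geometry forces any on-curve pair there to carry data-fidelity at least $d(Y_i,Y_{i+1})^2/(1+L^2)\ge 2\lambda/(1+L^2)$; combined with the oracle budget this rules out a false snap and gives $d(\hat X_i,\hat X_{i+1})\ge 2^{-b}$. The benefit of settling these transitions first is bookkeeping: every unstructured transition of $\hat X^n$ is then pinned near the top of the band, $r_b\approx 2$, so that when I subsequently re-seed a structured run its two boundary terms --- which are unstructured --- can only move downward, perturbing the objective by $O(\lambda/b)$ rather than $O(\lambda)$.

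For a structured transition $S_i\neq 0$ I would argue by contradiction: if $\hat X^n$ leaves some interior transition of the run $[a,c]$ off its curve, replace $\hat X^n$ on $[a,c]$ by the orbit $\check u_{j+1}=f_{S_j}(\check u_j)$ seeded by the least-squares-optimal initial value. Every interior transition then costs $\le 1+C/b$, saving $\approx\lambda$ for each previously off-curve interior transition, the boundaries cost only $O(\lambda/b)$ by the previous paragraph, and the induced change of fidelity is lower order by $\Ec_2$; this contradicts optimality and yields $\abs{\hat X_{i+1}-f_{S_i}(\hat X_i)}\le C2^{-b}$, while a snap to a \emph{wrong} curve $f_q$ is excluded by rerunning the $\Ec_1$ argument at the endpoints of the run. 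The hard part is exactly this run-reseeding estimate: I must bound the fidelity cost of changing the seed of an entire structured run \emph{uniformly in its length}, which is why $\Ec_2$ is stated through the partial sums $\sum_i Z_i$ (a first-order, telescoping control) rather than through $\norm{Z^n}^2$; and I must check that the Lipschitz constant $L$ keeps the re-seeded orbit within $O(2^{-b})$ of the quantization lattice so that \cref{lemma:convergence-regularizer-markov} applies verbatim along the run.
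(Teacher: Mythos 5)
Your proposal follows the paper's proof in all essentials: the same two block-replacement contradiction arguments against the optimality of $\hat X^n$ (first ruling out false snaps at transitions with $S_i=0$, then ruling out extraneous breaks inside structured runs), the same structure-following constrained least-squares oracle as the competitor, with $\Ec_2$ used to bound the oracle's data fidelity and $\Ec_1$ used to force a fidelity penalty wherever $\hat X^n$ is snapped onto a curve at a truly unstructured transition. Your preliminary aggregate count of off-curve transitions is harmless but never actually used, and your closing points (boundary transitions of a re-seeded run cost only $O(\lambda/b)$ once stage one is settled; quantization-compatibility of the re-seeded orbit) correspond to steps the paper also takes via the nonemptiness of $\calC(s)$ in \cref{lemma:convergence-regularizer-markov}.

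The one step that does not close as written is precisely the one where you were more careful than the paper: the exclusion of false snaps. Un-snapping a transition costs the competitor roughly $\lambda$ in regularizer (moving $r_b$ from about $1$ to about $2$), so the contradiction requires the fidelity penalty carried by a falsely snapped $\hat X^n$ to strictly exceed $\lambda$. Your bound is the correct one: if $\abs{\hat X_{i+1}-f_p(\hat X_i)}\le 2^{-b}$, the triangle inequality and Lipschitz continuity give $\abs{Y_{i+1}-\hat X_{i+1}}+L\abs{Y_i-\hat X_i}\ge \sqrt{2\lambda}$, and Cauchy--Schwarz yields a penalty of $2\lambda/(1+L^2)$, with equality attainable. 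But $2\lambda/(1+L^2)\le\lambda$ whenever $L\ge 1$, and the theorem allows arbitrary $L$, so ``combined with the oracle budget this rules out a false snap'' fails for $L\ge 1$. The paper has the identical problem, masked by asserting a penalty of $2\lambda$ at this point, which does not follow from the $\sqrt{2\lambda}+2^{-b}$ threshold (even for $L=1$ one only gets $\lambda$, and $\lambda$ does not beat the regularizer cost $\lambda(1+C/b)$). The fix is to enlarge the threshold in the definition of $\Ec_1$ to $c\sqrt{(1+L^2)\lambda}+2^{-b}$ with $c>1$; this is harmless downstream, since the proof of \cref{thm:recovery-1-markov} only needs $\PP(\Ec_1^c)$ to be $O(\sqrt{\lambda})$, which still holds because the enlarged threshold remains $O(\sqrt{\lambda})$.
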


\begin{proof}

We first show that if the estimated signal cannot have missing breaks, that is, if $S_i = 0$,
then this must be reflected in the estimated signal $\hat{X}^n$. We proceed by contradiction.
Suppose there exists $i_1 < i_2$ such that $\hat{X}^n$  verifies $\diststru(\hat{X}_i, \hat{X}_{i + 1}) \leq 2^{-b}$
for all $i \in \{i_1, \dotsc, i_2 - 1\}$.  Additionally, suppose that there exists
$l_1, \dotsc, l_m$ with $i_1 < l_1 < \dotsb < l_m < i_2$ and $S_{l_q} = 0$ for $q = 1, \dotsc, m$.
We define a new estimator:
\begin{equation}
    \tilde{X}_i = \begin{cases}
        \hat{X}_i &\text{ if } i \not\in \{ i_1 + 1, \dotsc, i_2 \}, \\
        \tilde{F}(Y_{l_p + 1}^{l_{p + 1}}, S_{l_p + 1}^{l_{p + 1}})_i &\text{ if } l_p + 1 \leq i \leq l_{p + 1},
    \end{cases}
\end{equation}
where we have defined the quantization-compatible analogue structure-oracle estimator
$\tilde{F}_b(Y, s)$ for a sequence
$s$ such that $s_i \neq 0$ as:
\begin{gather*}
    \tilde{F}_b(Y, s) = \argmin_{U \in \calC(s)} \norm{Y - U}_2^2, \\
    \calC(s) = \{ u : s_i \neq 0 \Rightarrow \abs{u_{i + 1} - f_{s_i}(u_i)} \leq 2^{-b} \\
     \text{ and } r_b(u_i, u_{i + 1}) < 1 + C / b \}.
\end{gather*}

Note that by \cref{lemma:convergence-regularizer-markov}, there exists
a $\check{X}^n \in \calC(s)$ such that $\norm{X^n - \check{X}^n} \leq C_2 2^{-b}$. In particular,
$\tilde{F}_b(Y, s)$ is well-defined as $\calC(s)$ is not empty.
By the definition of $\tilde{F}_b(Y, s)$ and the triangular inequality, we then have that
(using the fact that we are on the event $\Ec_2$):
\begin{align*}
    \norm{Y^n - \tilde{F}_b(Y, s)}_2
    &\leq \norm{Y^n - \check{X}^n}_2 \\
    &\leq C_2 2^{-b} + \norm{Y^n - X^n}_2 \\
    &\leq C_2 2^{-b} + n \sigma \log (1 / \sigma).
\end{align*}
Additionally, note that by the definition of $\calC(s)$, we have:
\begin{equation}
    r_b(\tilde{F}(Y, s)) \leq n (1 + C_1 / b).
\end{equation}

We may thus estimate the difference in objective between the proposed $\tilde{X}^n$
and the postulated $\hat{X}^n$. Note that they coincide outside of the interval
$i_1 + 1, \dotsc, i_2$, and hence suffices to consider that interval.
We have that:
\begin{align*}
    L(\tilde{X}^n) - L(\hat{X}^n)
    &\leq \sum_{q = 0}^k \norm{Y_{l_q + 1}^{l_{q + 1}} - \tilde{F}_b(Y_{l_q + 1}^{l_{q + 1}}, S_{l_q + 1}^{l_{p + 1}})}_2^2 \\
    &\quad - \norm{Y^n - \hat{X}^n}_2^2 + k \lambda (1 + C_4 / b) \\
    &\leq C (2^{-b} + \sigma^2 \log^2 (1 / \sigma)) - \norm{Y^n - \hat{X}^n}_2^2 \\
    &\quad + k\lambda (1 + C_4 / b).
\end{align*}

Now, note that we have assumed that $\hat{X}^n$ is structured at $l_q, q = 1, \dotsc k$,
that is, we have $\abs{\hat{X}_{l_q + 1} - f_{s_q}(\hat{X}_{l_q})} \leq 2^{-b}$ for
some choices of $s_q$. However, on the event $\Ec_1$, we note that $d(Y_i, Y_{i + 1}) > \sqrt{2\lambda} + 2^{-b}$,
and hence we deduce that:
\begin{equation*}
    (Y_{l_q} - \hat{X}_{l_q})^2 + (Y_{l_q + 1} - \hat{X}_{l_q + 1})^2 \geq 2\lambda.
\end{equation*}
Substituting into the previous result, we have that:
\begin{equation*}
    L(\tilde{X}^n) - L(\hat{X}^n) \leq C(2^{-b} + \sigma^2 \log^2(1 / \sigma))
    + k \{ -2\lambda + \lambda(1 + C / b) \},
\end{equation*}
which is negative for $b$ large and $\sigma$ small, a contradiction
of the optimality of $\hat{X}^n$.

We must also show the converse result, that our estimated signal $\hat{X}^n$ does
not have extraneous breaks. As previously,
we proceed by contradiction. Suppose that the condition fails at $i^*$, that
is, we have $d(\hat{X}_{i^*}, \hat{X}_{i^* + 1}) > c_1 2^{-b}$,
but there exists $i_1 < i^* < i_2$ such that $S_i \neq 0$ for all $i_1 \leq i < i_2$.

We define a new estimator $\tilde{X}^n$ to coincide with $\hat{X}^n$ outside of the
interval from $i_1$ to $i_2$, and to coincide with $\tilde{F}(Y_{i_1}^{i_2}, s(X)_{i_1}^{i_2})$
on the interval from $i_1$ to $i_2$. We may again compute the difference in loss as (using
the same results as above):
\begin{align*}
    L(\tilde{X}) - L(\hat{X})
    &\leq \norm{Y_{i_1}^{i_2} - \tilde{F}_b(Y_{i_1}^{i_2}, S_{i_1}^{i_2})}_2^2 - \lambda (1 - C_1 / b) \\
    &\leq C (2^{-b} + \sigma^2 \log^2(1 / \sigma)) - \lambda (1 - C_1 / b).
\end{align*}
Now, for $b$ large enough, and noting that $\lambda / (\sigma^2 \log^2(1 / \sigma)) \rightarrow \infty$,
we have that $L(\tilde{X}^n) - L(\hat{X}^n) < 0$, a contradiction.

\end{proof}

\subsection{Response Estimation}

Given the structure of the estimator established in the previous section,
we may show that the estimator achieves the claimed performance. As the estimation problem
is more complex in this case, we first consider the case with no quantization. In this case,
we have the following result.
\begin{lemma}
    \label{lemma:mse-analogue}
    Suppose that we observe a signal $Y^n$ such that:
    \begin{equation*}
        Y_i = g_i(\theta) + \sigma\epsilon_i,
    \end{equation*}
    where $g_i$ are known smooth functions, and $\epsilon_i$
    denotes i.i.d. gaussian noise.

    Let $\hat{\theta}$ denote the maximum likelihood estimator given by:
    \begin{equation*}
        \hat{\theta} = \argmin_u \sum_{i = 1}^n (Y_i - g_i(u))^2.
    \end{equation*}
    Then, we have that:
    \begin{equation*}
        \lim_{\sigma \rightarrow 0} \frac{1}{\sigma^2}\EE \sum_{i = 1}^n (g_i(\hat{\theta}) - g_i(\theta))^2 = 1.
    \end{equation*}
\end{lemma}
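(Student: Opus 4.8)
The plan is to treat this as a scalar nonlinear least-squares problem and exploit its geometry: the map $u \mapsto G(u) := (g_1(u), \dotsc, g_n(u))$ traces a one-dimensional curve in $\Real^n$, and $\hat\theta$ is nothing but (the parameter of) the orthogonal projection of $Y^n = G(\theta) + \sigma \epsilon$ onto this curve. In the vanishing-noise limit this projection linearizes to a projection onto the tangent line at $G(\theta)$, spanned by $G'(\theta) = (g_1'(\theta), \dotsc, g_n'(\theta))$. Since a standard Gaussian vector contributes expected squared norm $\sigma^2$ per orthonormal direction, its projection onto the single tangent direction has expected squared length exactly $\sigma^2$; the constant $1$ in the statement is precisely this one degree of freedom.

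First I would establish consistency. As $\sigma \to 0$ the objective $\sum_i (Y_i - g_i(u))^2$ converges to $\sum_i (g_i(\theta) - g_i(u))^2$, which is minimized at $u = \theta$, so $\hat\theta \to \theta$ and in particular $G(\hat\theta) \to G(\theta)$. Next I would use the first-order optimality condition $\sum_i (Y_i - g_i(\hat\theta)) g_i'(\hat\theta) = 0$, substitute $Y_i = g_i(\theta) + \sigma \epsilon_i$, and Taylor-expand $g_i(\hat\theta)$ and $g_i'(\hat\theta)$ about $\theta$. Retaining leading-order terms yields
\[
    \hat\theta - \theta = \frac{\sigma \sum_{i=1}^n \epsilon_i\, g_i'(\theta)}{\sum_{i=1}^n g_i'(\theta)^2}\,(1 + o(1)),
\]
which uses the non-degeneracy $\sum_i g_i'(\theta)^2 > 0$ guaranteed by smoothness and regularity of the curve.

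Propagating this into the response via $g_i(\hat\theta) - g_i(\theta) = g_i'(\theta)(\hat\theta - \theta) + O((\hat\theta - \theta)^2)$ gives
\[
    \sum_{i=1}^n (g_i(\hat\theta) - g_i(\theta))^2 = \frac{\sigma^2 \left(\sum_{i=1}^n \epsilon_i\, g_i'(\theta)\right)^2}{\sum_{i=1}^n g_i'(\theta)^2}\,(1 + o(1)).
\]
Since $\sum_i \epsilon_i g_i'(\theta)$ is centered Gaussian with variance $\sum_i g_i'(\theta)^2$, taking expectations makes the ratio collapse to $1$, so that dividing by $\sigma^2$ and letting $\sigma \to 0$ gives the claimed limit.

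The main obstacle is rigorously justifying the interchange of $\lim_{\sigma \to 0}$ with the expectation, since the Taylor expansions are valid only on a high-probability event where $\hat\theta$ is close to $\theta$. I would split the expectation accordingly: on the good event, control the $(1 + o(1))$ factor by dominated convergence, noting that the leading random quantity $\left(\sum_i \epsilon_i g_i'(\theta)\right)^2 / \sum_i g_i'(\theta)^2$ is $\chi^2_1$-distributed and hence uniformly integrable; on the complementary bad event, bound the contribution crudely using a Gaussian tail bound on $\norm{\epsilon}$ together with the boundedness and smoothness of the $g_i$, and verify that it is $o(\sigma^2)$. If the curve $G$ is not globally identifiable, I would argue directly that any competing minimizer with $G(\hat\theta)$ far from $G(\theta)$ would raise the objective above its value at $u = \theta$, an event of vanishing probability as $\sigma \to 0$.
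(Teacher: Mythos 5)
Your proposal is correct and takes essentially the same route as the paper's proof: both use the first-order optimality condition together with a Taylor expansion about $\theta$ to identify $(\hat\theta-\theta)/\sigma$ with $\sum_{i}\epsilon_i g_i'(\theta)\big/\sum_{i}[g_i'(\theta)]^2$ in the $\sigma\to 0$ limit, then propagate this through $g_i$ so that the normalized loss becomes a $\chi^2_1$-type quantity whose expectation is $1$. The only substantive difference is that you treat more carefully the one delicate step---interchanging $\lim_{\sigma\to 0}$ with the expectation via a good-event/bad-event split and uniform integrability---which the paper's proof passes over in its final sentence.
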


\begin{proof}

Let $L(u) = \sum_{i = 1}^n (Y_i - g_i(u))^2$ denote the loss function. Note that
we may equivalently write:
\begin{align*}
    L(u)
    &= \sum_{i = 1}^n (g_i(\theta) - g_i(u) + \sigma \epsilon_i)^2 \\
    &= \sum_{i = 1}^n (g_i(\theta) - g_i(u))^2 + 2 \sigma \sum_{i = 1}^n \epsilon_i (g_i(\theta) - g_i(u)) + C \\
    &= \sum_{i = 1}^n [g'_i(\theta)]^2(\theta - u)^2 + 2 \sigma \sum_{i = 1}^n \epsilon_i g'_i(\theta)(\theta - u) \\
    &\quad + R(\theta - u) + C,
\end{align*}
where $C$ denote a constant which does not depend on $u$, and $R(u)$ denote the remainder
of the Taylor expansion of $L(u)$ about $\theta$. Now, we may directly differentiate $L(u)$
to obtain a first-order condition for $\hat{\theta}$, namely:
\begin{equation*}
    \sum_{i = 1}^n [g'_i(\theta)]^2 (\theta - \hat{\theta}) = \sigma \sum_{i = 1}^n \epsilon_i g'_i(\theta) + R(\theta - \hat{\theta}),
\end{equation*}
which we may re-arrange as:
\begin{equation*}
    \frac{\theta - \hat{\theta}}{\sigma} = \frac{\sum_{i = 1}^n \epsilon_i g'_i(\theta)}{\sum_{i = 1}^n [g'_i(\theta)]^2} + \frac{R(\theta - \hat{\theta})}{\sigma}.
\end{equation*}

Using the Lagrange form of the remainder, we have that:
\begin{equation*}
    R(\theta - \hat{\theta}) = (\theta - \hat{\theta})^2 \sum_{i = 1}^n g''_i(\theta^*),
\end{equation*}
and hence we have:
\begin{equation*}
    \frac{\theta - \hat{\theta}}{\sigma} \left(1 + (\theta - \hat{\theta})\sum_{i = 1}^n g''_i(\theta) \right) = \frac{\sum_{i = 1}^n \epsilon_i g'_i(\theta)}{\sum_{i = 1}^n [g'_i(\theta)]^2}.
\end{equation*}
Now, we have that $\hat{\theta} \rightarrow \theta$ as $\sigma \rightarrow 0$, and hence we may take limits in the equation
above to obtain that:
\begin{equation*}
    \lim_{\sigma \rightarrow 0} \frac{\theta - \hat{\theta}}{\sigma} = \frac{\sum_{i = 1}^n \epsilon_i g'_i(\theta)}{\sum_{i = 1}^n [g'_i(\theta)]^2}.
\end{equation*}
Plugging this into the loss, and making use of a similar Taylor expansion, we obtain the claimed result.

\end{proof}

Finally, we use the previous lemmas to prove the main theorem.
\begin{proof}[Proof of \Cref{thm:recovery-1-markov}]
We first consider an auxiliary quantity. Let $\tilde{X}^n$ be defined as:
\begin{align*}
    \tilde{X}^n &= \argmin_{U^n \in \Dc(S)} \norm{Y^n - U^n}_2^2, \\
    \Dc(S) &=  \{ U^n: S_i=0\;{\rm or} \; U_{i + 1} = f_{S_i}(U_i),  i=1,\ldots,n\}.
\end{align*}
By \cref{lemma:mse-analogue}, we have that:
\begin{equation*}
    \lim_{\sigma \rightarrow 0} \EE[(\tilde{X}_{n / 2} - X_{n / 2})^2 \mid S] = \frac{1}{L \wedge n},
\end{equation*}
where $L$ denotes the length of the structure piece which contains the $n / 2$ index. More precisely, $L$ is such that:
\begin{gather*}
    S_m = S_{m + L} = 0, \\
    S_{m + 1} \neq 0, \dotsc, S_{n / 2} \neq 0, \dotsc, S_{m + L - 1} \neq 0.
\end{gather*}
As $S$ is an i.i.d. process, we may compute the distribution of $L$ explicitly, and in particular,
we have that $\EE [1 / L] = q_0$. We thus deduce that, for our auxiliary oracle estimator $\tilde{X}$:
\begin{equation*}
    \lim_{n \rightarrow \infty} \lim_{\sigma \rightarrow 0} \EE(\tilde{X}_{n / 2} - X_{n / 2})^2 = q_0.
\end{equation*}

Now, let $m = \abs{\{ i : S_i = 0 \}}$, and note that by \cref{lemma:structure-recovery-markov}, we have that
$r_b(\hat{X}) \geq m (2 - C / b) + (n - m) (1 - C / b)$. On the other hand, by \cref{lemma:convergence-regularizer-markov},
and the definition of $\Dc(S)$, we have that there exists $\check{X}$ such that $\norm{\check{X} - \tilde{X}}_2 \leq C 2^{-b}$,
and $r_b(\check{X}^n) \leq m (2 - C / b) + (n - m)(1 + C / b)$. Now, by definition of $\hat{X}^n$, we have that $L(\hat{X}^n) \leq L(\check{X}^n)$,
from which we deduce:
\begin{align*}
    \norm{Y^n - \hat{X}^n}_2^2 &\leq \norm{Y^n - \check{X}^n}_2^2 + \lambda n C / b \\
&    \leq \norm{Y^n - \tilde{X}^n} + C 2^{-b} + \lambda n C / b.
\end{align*}
Now again by \cref{lemma:structure-recovery-markov}, we have that there exists $\hat{\tilde{X}}^n$ such that
$\hat{\tilde{X}}^n \in \Dc(S)$ and $\norm{\hat{\tilde{X}}^n - \hat{X}^n} \leq C 2^{-b}$, and hence we have:
\begin{equation*}
    \norm{Y^n - \hat{\tilde{X}}^n}_2^2 \leq \norm{Y^n - \tilde{X}^n} + 2 \lambda C 2^{-b} + n C / b.
\end{equation*}
However, noting that with probability 1, $\tilde{X}$ is a solution of a locally strongly convex
optimization problem, and that $\hat{\tilde{X}}^n$ is feasible, we deduce that:
\begin{equation*}
    \norm{\tilde{X}^n - \hat{\tilde{X}}^n}_2 \leq o(b),
\end{equation*}
from which we deduce that $\hat{X}^n \rightarrow \tilde{X}^n$, as $b \rightarrow \infty$ on
the event $\Ec_1 \cap \Ec_2$.

To conclude, we are also required to control $\norm{X^n - \hat{X}^n}$ on $\Ec_1^c$ and $\Ec_2^c$.
Note that $\PP(\Ec_2^c) \leq 2 \exp\{-\log^2(1 / \sigma)  / 2\} \leq 2\sigma^{-\log \sigma / 2}$ by
standard Gaussian concentration,
so that $\frac{1}{\sigma^2}\PP(\Ec_2^c) \rightarrow 0$ as $\sigma \rightarrow 0$. Hence on this
event we simply control $\norm{X - \hat{X}}$ by a constant.

On the other hand, note that $L(Y) \leq n \lambda (2 +  C / b)$, and $L(\hat{X}) \leq L(Y)$, from
which we deduce that:
\begin{equation*}
    \norm{Y - \hat{X}}_2^2 \leq n \lambda (2 +  C / b) \leq 2 n \lambda.
\end{equation*}
Additionally, on $\Ec_2$, we have that $\norm{Y - X}_2 \leq \sigma \log(1 / \sigma)$,
and hence we have on this event that:
\begin{equation*}
    \norm{X - \hat{X}}_2 \leq \sigma \log(1 / \sigma) + \sqrt{2 n } \lambda.
\end{equation*}
On the other hand, we have that:
\begin{align*}
    \PP(\Ec_1^c)
    &\leq \sum_{i = 1}^n \PP(d(Y_i, Y_{i + 1}) \leq \delta \mid S_i = 0)\PP(S_i = 0) \\
    &\leq \sum_{i = 1}^n C \delta,
\end{align*}
as $(Y_i, Y_{i + 1})$ has a continuous distribution with density uniformly bounded above for any $\sigma$
when $S_i = 0$. Hence we deduce that:
\begin{equation*}
    \frac{1}{\sigma^2}\EE[\norm{X - \hat{X}}_2^2 \mid \Ec_1^c, \Ec_2] \PP(\Ec_1^c) \leq \frac{1}{\sigma^2}\PP(\Ec_1^c) (\sigma \log (1 / \sigma) + \delta)^2,
\end{equation*}
from which we have that $\frac{1}{\sigma^2}\EE[\norm{X - \hat{X}}_2^2 \mid \Ec_1^c, \Ec_2] \PP(\Ec_1^c) \rightarrow 0$
as $\sigma \rightarrow 0$.

\end{proof}

\section{Numerical experiments}\label{sec:numerical}

As an illustration of this general approach, we consider its application to image denoising.
Image denoising is a well-studied task, and state-of-the-art methods  achieve impressive performance,
with two main strategies being non-local methods \cite{dabov2007image,BuadesCollMorel2005} and those based on neural networks
\cite{Ulyanov2018Deep}. We propose an initial implementation inspired by the Q-MAP estimator for denoising images.

\subsection{Implementation}
The  Q-MAP denoiser is a Bayesian method that  requires access to source distribution $\pi$. In image denoising, $\pi$ represents the distribution of natural images (or patches thereof), which is very challenging to estimate.  However, as described earlier, not only does the new framework lead us to a learning-based denoiser, it also guides us to the key properties of the distribution  (as opposed to its full characterization) that we need to estimate.     Recall that the ideal weights required by Q-MAP \eqref{eq:weights} are derived from the distribution of the quantized $k$-blocks. Moreover, as highlighted in the examples, only a few of such blocks (those that have a non-vanishing  probability, as $b\to \infty$) matter. Therefore,  we basically  only need to identify (learn) such special blocks (patches) and estimate their probabilities.

Concretely, our initial implementation considers
patches of $4 \times 4$ pixels, and a quantizer $Q : \RR^{4 \times 4} \rightarrow \{ 1, \dotsc, N\}$, where $N$
denotes the number of chosen codewords. Our quantizer $Q$ is designed with the help of some prior knowledge on
the structure of images, and is based on quantizing coordinates of the discrete wavelet transform of $U$. Exact details
are provided in the next section.

We then consider a patch-based Q-MAP, where given a patch $Y^{4 \times 4}$, we compute an estimate $\hat{X}^{4 \times 4}$
such that:
\begin{equation}
    \label{eq:implementation-patch}
    \hat{X}^{4 \times 4} = \argmin_{U \in \RR^{4 \times 4}} \norm{Y^{4 \times 4} - U}_2^2 + \lambda \log \hat{\pi}(Q(U)),
\end{equation}
and the estimate for the full image is obtained by processing each patch separately and averaging.
Here, $\hat{\pi}$ denotes a discrete distribution on the quantized codewords, learned from data.
We learn $\hat{\pi}$ by sampling natural images from the ImageNet dataset \cite{deng2009imagenet},
and computing the empirical distribution $\hat{\pi}$ of the quantized patches generated from the images.

\subsection{Quantizer}

In our theoretical investigation, we have considered a simple binary quantization of the signal. Unfortunately, such an approach
suffers from the curse of dimensionality, whereby the amount of information required to describe a $k$-th order joint distribution
$\pi_{k, b}$ is of the order $2^{kb}$ and thus increases exponentially with both the quantization level and the order of the dependency
considered. This causes difficulties, both in estimating the joint distribution $\pi_{k, b}$ accurately, and also in optimizing efficiently
over the postulated objective.

Instead, we consider a quantization scheme which can be efficiently optimized over, and which can more efficiently capture
typical distribution of images. In an initial implementation, we consider a variable rate quantization of an orthogonal
wavelet transform
of the $4 \times 4$ patch. More precisely, consider the 2-D DWT transform $\tilde{U}^{4 \times 4}$ of the observed signal.
The quantization $Q(U)$ is given by concatenating 4-bit codes for coordinates of total frequency 1 ($\tilde{U}_{1, 0}$ and $\tilde{U}_{0, 1}$),
3-bit codes for coordinates of total frequency 2, 2-bit codes for coordinates of total frequency 3, and 1-bit codes for coordinates
of total frequency 4. All higher frequencies are
ignored. This gives a total code size of $N = 2^{28}$. The breakpoints in the codes are chosen according to the empirical distribution
observed in the training data.

Note that due to the specific structure of this regularizer, the constrained problem $\min_{U : Q(U) = k} \norm{Y - U}_2^2$
can be solved in closed form by coordinate-wise projection (as the subset $Q(U) = k$ describes a rotated rectangular region).
This enables us to compute the patch estimate $\hat{X}^{4 \times 4}$ efficiently.
We anticipate that in tasks that are well understood (such as image denoising), performance can be further improved by choosing
better quantization schemes.

\subsection{Results}
We have computed $\hat{\pi}$ as the empirical distribution of patches extracted from a subset of 12500 images
from the ImageNet dataset, with 128 patches sampled per image, for a total of 1.6 million samples. The estimator
$\hat{\pi}$ is smoothed by adding one pseudo-occurrence to each codeword. As can be expected from a
highly structured dataset, the distribution of $\hat{\pi}$ is strongly skewed, and seems to follow a power law
(see \cref{fig:rank-prob}).
\begin{figure}
    \centering
    \includegraphics[width=.9\columnwidth]{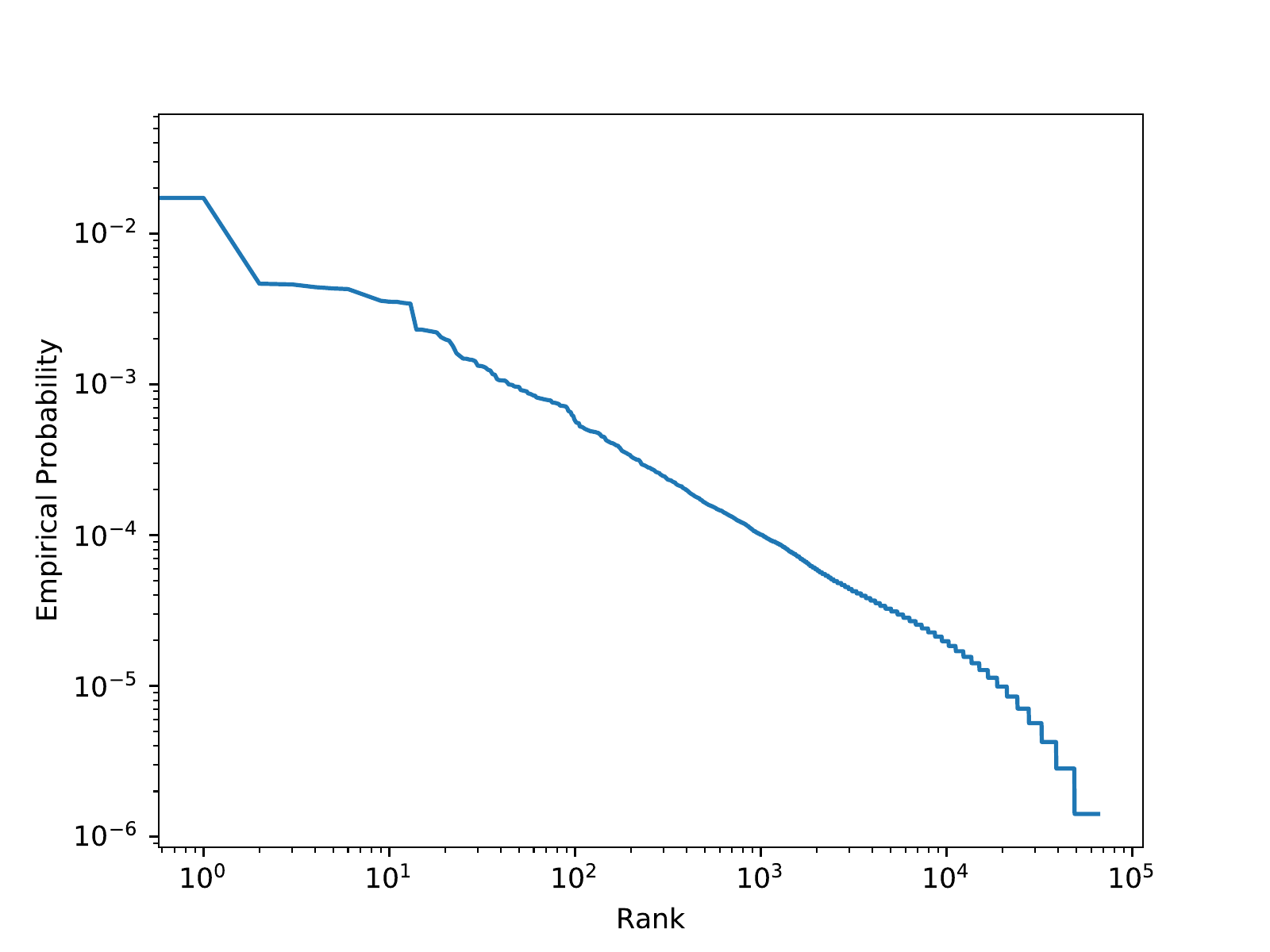}
    \caption{Rank-Probability Plot of $\hat{\pi}$\label{fig:rank-prob}}
\end{figure}

We use the learned distribution $\hat{\pi}$ to perform the denoising task on some standard image benchmarks
\cite{dabov2007image}, and present some preliminary results in \cref{table:comparison}. For reference,
we have also included a hard thresholding filter. It is interesting
to note that it corresponds to the Q-MAP filter with $\pi$ having a  uniform distribution over key features, further highlighting
the impact of the regularizer in the Q-MAP approach.

We note that the na\"ive patch estimator \eqref{eq:implementation-patch} can be
expensive, as it may require an exhaustive search over all possible codewords. However,  in practice,
most codewords have negligible probability, and can be ignored. Further speed-up may be possible
by pre-computing data structures to speed up the search.

\begin{table}
    \centering
    \begin{tabular}{rrrrrrr}
        \toprule
        & \multicolumn{4}{c}{PSNR} \\
        \cmidrule(lr){2-7}
        & \multicolumn{3}{c}{Camera} & \multicolumn{3}{c}{Peppers} \\
        \cmidrule(lr){2-4} \cmidrule(lr){5-7}
        $\sigma$ & Thresh & Q-MAP & BM3D & Thresh & Q-MAP & BM3D \\
        \midrule
        10 & 28.14 & 33.01 & 34.18 & 28.11 & 33.53 & 34.68 \\
        15 & 24.63 & 30.54 & 31.91 & 24.60 & 31.38 & 32.70 \\
        20 & 22.07 & 28.95 & 30.48 & 22.12 & 29.73 & 31.29 \\
        25 & 20.22 & 27.86 & 29.45 & 20.12 & 28.45 & 30.16 \\
        \bottomrule \\
    \end{tabular}

    \caption{Comparison between Q-MAP and BM3D \label{table:comparison}}
\end{table}

\section{Conclusions}\label{sec:conclusion}
In this paper we have studied the problem of denoising general analog stationary processes. In the Bayesian setting, where the source full distribution is known, we have proposed a new denoiser, Q-MAP denoiser. We have characterized the asymptotic    performance of the Q-MAP denoiser, as the power of noise approaches zeros, for i) stationary memoryless sources, and ii) structured 1-Markov sources. We have shown that the proposed method achieves optimal asymptotic  performance, at least for i.i.d.~sources. We have argued that the proposed method leads to a learning-based denoising algorithm. Initial results showing an application of the proposed learning-based method in image denoising is presented. 

\bibliographystyle{unsrt}
\bibliography{myrefs}

\end{document}